\documentclass[11pt,letterpaper,oneside,english]{article}
\usepackage[left=3.5cm,right=3.5cm,top=3cm,bottom=3cm]{geometry}

\usepackage{mdwlist,enumerate}
\usepackage{datetime}

\makeatletter
\let\@font@warningori\@font@warning
\newcommand\shutup{\def\@font@warning##1{}}
\newcommand\youcanspeak{\let\@font@warning\@font@warningori}
\makeatother 
\usepackage{amsmath,amsbsy,bm,latexsym,amssymb}
\shutup
\usepackage{fourier}
\youcanspeak
\usepackage[scaled=0.875]{helvet}


\usepackage{graphics, subfigure, float}
\usepackage{fp, calc}

\usepackage[latin1]{inputenc}
\usepackage[american]{babel}
\usepackage[T1]{fontenc} 


\usepackage{amscd,amsthm}

\usepackage{verbatim, comment}




\usepackage[dvips]{graphicx,epsfig,color}
\usepackage{pst-all}
\usepackage{pstricks-add}

\theoremstyle{theorem}
\newtheorem{theorem}{Theorem}
\newtheorem*{theorem*}{Theorem}
\newtheorem{corollary}[theorem]{Corollary}

\newtheorem{lemma}[theorem]{Lemma}
\newtheorem*{lemma*}{Lemma}

\newtheorem*{claim*}{Claim}

\newtheorem*{conjecture*}{Conjecture}
\newtheorem*{problem*}{Problem}

\newtheorem*{definition*}{Definition}
\newtheorem{observation}{Observation}

\theoremstyle{remark}

\newtheorem*{remark*}{Remark}

\newtheorem*{algorithm*}{Algorithm}

\providecommand{\setN}{\mathbb{N}}
\providecommand{\setZ}{\mathbb{Z}}
\providecommand{\setQ}{\mathbb{Q}}
\providecommand{\setR}{\mathbb{R}}

\newcommand{\supp}{\textrm{supp}}

\newcommand{\polylog}{\ensuremath{\textrm{polylog}}}

\floatstyle{ruled}
\newfloat{algorithm}{tbp}{loa}
\floatname{algorithm}{Algorithm}
\newfloat{algorithm*}{tbp}{loa}
\floatname{algorithm*}{Algorithm}

\usepackage[displaymath,textmath,sections,graphics, subfigure, floats]{preview} 
\PreviewEnvironment{center} 
\PreviewEnvironment{pspicture} 

\usepackage{todonotes}
        \def\drawRect#1#2#3#4#5{
           \FPeval{\x2}{(#2) + (#4)} 
           \FPeval{\y2}{(#3) + (#5)} 
           \pspolygon[#1](#2,#3)(\x2,#3)(\x2,\y2)(#2,\y2)
        }



\psset{arrowsize=6pt, labelsep=2pt, linewidth=1.5pt}
\makeatother
\usepackage{hyperref}

\title{Approximating Bin Packing within $O(\log OPT \cdot \log \log OPT)$ bins}

\author{\Large{Thomas Rothvoß}\thanks{Email: {\tt{rothvoss@math.mit.edu}}. Supported by the Alexander von Humboldt Foundation within the Feodor Lynen program, by ONR grant N00014-11-1-0053 and by NSF contract 1115849. 
} \vspace{3mm}  \\
Massachusetts Institute of Technology} 
\date{}

\begin{document}

\maketitle

\begin{abstract}
\noindent 
For \emph{bin packing}, the input consists of $n$ items 
with sizes $s_1,\ldots,s_n \in [0,1]$ which have to be assigned to a minimum
number of bins of size 1. The seminal Karmarkar-Karp algorithm from '82
produces a solution with at most $OPT + O(\log^2 OPT)$ bins.

We provide the first improvement in now 3 decades and show that 
one can find a solution of cost
$OPT + O(\log OPT \cdot \log \log OPT)$ in polynomial time. 
This is achieved by rounding a fractional solution to the
Gilmore-Gomory LP relaxation using the \emph{Entropy Method} from discrepancy theory. 
The result is constructive via algorithms of Bansal and Lovett-Meka.
\end{abstract}

\section{Introduction}

Bin Packing is one of the very classical combinatorial optimization 
problems studied in computer science and operations research. 
It's study dates back at least to the 1950's~\cite{TrimProblem-Eiseman1957} 
and it appeared as one of the prototypical $\mathbf{NP}$-hard problems in 
the book of Garey and Johnson~\cite{GareyJohnson79}. For a detailed account, 
we refer to the survey of \cite{BinPackingSurvey84}.
Bin Packing is also a good case study to demonstrate the development of 
techniques in approximation algorithms. The earliest ones are simple greedy algorithms 
such as the \emph{First Fit} 
algorithm, analyzed by
Johnson~\cite{Johnson73} which requires  at most $1.7\cdot OPT +
1$ bins and \emph{First Fit Decreasing}~\cite{JohnsonFFD74}, which yields a solution with 
at most $\frac{11}{9} OPT + 4$
bins (see~\cite{FFDtightBound-Dosa07} for a tight bound of $\frac{11}{9} OPT + \frac{6}{9}$). 
Later,  Fernandez de la Vega and Luecker~\cite{deLaVegaLueker81} developed 
an \emph{asymptotic PTAS} by introducing an \emph{item grouping technique}
that reduces the number of different item types and has been  
 reused in numerous papers for related problems.
De la Vega and Luecker were able to find a solution of cost at most $(1+\varepsilon)OPT + O(\frac{1}{\varepsilon^2})$ for Bin Packing 
and the running time is either of the form $O(n^{f(\varepsilon)})$ if one uses
dynamic programming or of the form $O(n \cdot f(\varepsilon))$ if one applies 
linear programming techniques.

A big leap forward in approximating bin packing was done by Karmarkar and Karp
in 1982~\cite{KarmarkarKarp82}, who provided an iterative rounding approach 
for the mentioned
linear programming formulation which produces a solution with at most
$OPT + O(\log^2 OPT)$ bins in polynomial time, corresponding to an
\emph{asymptotic FPTAS}.

Both papers \cite{deLaVegaLueker81,KarmarkarKarp82} used the \emph{Gilmore-Gomory LP relaxation} (see e.g. \cite{TrimProblem-Eiseman1957,Gilmore-Gomory61}) 
\begin{equation} \label{eq:GilmoreGomory}
 \min\left\{ {\bf{1}}^Tx \mid Ax = \bm{1}, x \geq \bm{0} \right\}
\end{equation}
where $A$ is the \emph{pattern matrix} that consists of all column vectors $\{ p \in \setZ_{\geq0}^n \mid p^Ts \leq 1\}$.
Each such column $p$ is called a \emph{(valid) pattern} and corresponds
to a feasible multiset of items that can be assigned to a single bin. 
Note that it would be perfectly possible to consider a stronger variant in 
which only patterns $p \in \{ 0,1\}^n$ are admitted. In this case, 
the LP \eqref{eq:GilmoreGomory} could also be interpreted as the
standard {\sc (Unweighted) Set Cover}
relaxation 
\begin{equation} \label{eq:SetCoverLP}
\min\Big\{ \sum_{S \in \mathcal{S}} x_S \mid \sum_{S \in \mathcal{S}: i \in S} x_S \geq 1 \; \forall i \in [n]; x_S \geq 0 \; \forall S \in \mathcal{S}  \Big\}
\end{equation}
for the set system $\mathcal{S} := \{ S \subseteq [n] \mid \sum_{i \in S} s_i \leq 1 \}$.
However, the additive gap between both versions is at most $O(\log n)$ anyway, thus
we stick to the matrix-based formulation as this is more suitable
for our technique\footnote{For example, 
if the input consists of a single item of size $\frac{1}{k}$, then the optimum value
of \eqref{eq:SetCoverLP} is $1$, while the optimum  value of \eqref{eq:GilmoreGomory} is $\frac{1}{k}$. 
But the additive gap can be upper bounded as follows: Take a solution $x$ to \eqref{eq:GilmoreGomory} and  
apply a single grouping via Lemma~\ref{lem:Grouping} with parameter $\beta = 1$. 
This costs $O(\log n)$ and results in a solution to \eqref{eq:SetCoverLP} for some general right hand side 
vector $b \in \setZ_{\geq0}^n$. With the usual cloning argument, this can be easily converted into the form with right hand side $\bm{1}$.}.


Let $OPT$ and $OPT_f$ be the value of
the best integer and fractional solution for \eqref{eq:GilmoreGomory} respectively. 
Although  \eqref{eq:GilmoreGomory} has an exponential number of variables, 
one can compute a basic solution $x$ with $\bm{1}^Tx \leq OPT_f + \delta$ in time polynomial in $n$ and
$1/\delta$~\cite{KarmarkarKarp82} using the Gr{ö}tschel-Lovász-Schrijver
variant of the Ellipsoid method~\cite{GLS-algorithm-Journal81}. 
Alternatively, one can also use the Plotkin-Shmoys-Tardos framework~\cite{FractionalPackingAndCovering-PlotkinShmoysTardos-Journal95} or the
multiplicative weight update method (see e.g. the survey of \cite{MWU-Survey-Arora-HazanKale2012}) to achieve the same guarantee.

The Karmarkar-Karp algorithm operates in $\log n$ iterations in which one 
first groups the items such that only $\frac{1}{2}\sum_{i \in [n]} s_i$ many different item sizes 
remain; then one computes a basic solution $x$ and buys $\lfloor x_p\rfloor$ times 
pattern $p$ and continues with the residual instance. 
The analysis provides a $O(\log^2 OPT)$ upper bound on the 
\emph{additive} integrality gap of \eqref{eq:GilmoreGomory}. 
In fact, it is 
even conjectured in \cite{BinPacking-MIRUP-ScheithauerTerno97} that
\eqref{eq:GilmoreGomory} has the \emph{Modified Integer Roundup Property}, i.e. $OPT \leq \lceil OPT_f\rceil + 1$
(and up to date, there is no known counterexample; the conjecture is known to 
be true for instances that contain at most 7 different
item sizes~\cite{MIRUPproofForDim7-SeboShmonin09}). 
Recently, \cite{BinPackingViaPermutationsSODA2011} found a connection between
coloring permutations and bin packing which shows that 
\emph{Beck's Three Permutations Conjecture} (any 3 permutations
can be bi-colored with constant discrepancy) 
would imply a constant integrality gap
at least for instances with all item sizes bigger than $\frac{1}{4}$.
Note that the gap bound of the Karmarkar-Karp algorithm is actually of the form $O(\log OPT_f \cdot \log( \max_{i,j} \{ \frac{s_i}{s_j} \}))$,
which is $O(\log n)$ for such instances.
But very recently Newman and Nikolov~\cite{CounterexampleBecksPermutationConjecture-FOCS12} found a counterexample to Beck's conjecture.

Considering the gap that still remains between upper and lower bound on the additive integrality gap, 
one might be tempted to try to modify the Karmarkar-Karp algorithm in order
to improve the approximation guarantee. From an abstract point of view, 
\cite{KarmarkarKarp82} buy only patterns that already appear in the initial
basic solution $x$ and then map every item to the slot of a \emph{single}
larger item. 
Unfortunately, combining the insights from \cite{CounterexampleBecksPermutationConjecture-FOCS12} and \cite{BinPackingViaPermutationsSODA2011}, one can show that no
algorithm with this abstract property can yield a $o(\log^2 n)$ gap, which 
establishes a barrier for a fairly large class of algorithms~\cite{BinPackingViaPermutationsTALG2013}.

A simple operation that does not fall into this class is the following: 
\begin{quote}
\emph{Gluing:} Whenever we have a pattern $p$ with $x_p>0$ that has many
copies of the same item, glue these items together and consider them
as a single item.
\end{quote}
In fact, iterating between gluing and grouping, results in a mapping of 
\emph{several} small input items into the slot of a \emph{single} large item -- the
barrier of~\cite{BinPackingViaPermutationsTALG2013} does not hold for such a rounding procedure. 

But the huge problem is: there is no reason why in the worst case, a fractional bin packing solution $x$
should contain patterns with many items of the same type. Also the Karmarkar-Karp
rounding procedure does not seem to benefit from that case either. 
However, there is an alternative algorithm of the author~\cite{EntropyMethodRothvoss-SODA12} to achieve a $O(\log^2 OPT)$ upper bound, 
which is based on \emph{Beck's entropy method}~\cite{Beck-DiscrepancyIntegerSequences-Combinatorica81,Beck-IrregularitiesOfDistributionII-1988,SixStandardDeviationsSuffice-Spencer1985} (or \emph{partial coloring lemma})
from \emph{discrepancy theory}. This is a subfield of combinatorics which deals with the
following type of questions: given a set system $S_1,\ldots,S_n \subseteq [m]$, find a coloring
of the elements $1,\ldots,m$ with red and blue, such that for each set $S_i$
the difference between the red and blue elements (called the \emph{discrepancy}) is as small as possible. 

\section{Outline of the technique}

The partial coloring method is a very flexible technique to color at least half of the elements
in a set system with a small discrepancy, but the technique is based on 
the pigeonhole principle --- with
exponentially many pigeons and pigeonholes --- and is hence non-constructive in 
nature\footnote{The claim of the Partial coloring lemma is as follows: 
Given any vectors $v_1,\ldots,v_n \in \setR^m$ with a parameters $\lambda_1,\ldots,\lambda_n >0$ 
satisfying
\[
  \sum_{i=1}^n G\left( \lambda_i \right) \leq \frac{m}{5}, \quad \textrm{for}\quad G(\lambda) := \begin{cases} 9e^{-\lambda^2/5} & \textrm{if } \lambda \geq 2 \\
\log_2( 32 + \frac{64}{\lambda}) & \textrm{if } \lambda < 2 
\end{cases} 
\]
Then there is a partial coloring $\chi : [m] \to \{ 0,± 1\}$ with  $|\textrm{supp}(\chi)| \geq \frac{m}{2}$ and $|v_i\chi| \leq \lambda_i\|v_i\|_2$ 
for all vectors $i=1,\ldots,n$.}.
But recently Bansal~\cite{DiscrepancyMinimization-Bansal-FOCS2010} and later Lovett and 
Meka~\cite{DiscrepancyMinimization-LovettMekaFOCS12} provided
polynomial time algorithms to find those colorings.
In fact, it turns out that our proofs are even simpler using the 
Lovett-Meka algorithm than using the classical non-constructive version, 
thus we directly use the constructive method. 

\subsection*{The constructive partial coloring lemma}

The Lovett-Meka algorithm provides the following guarantee\footnote{The original statement has $x,y \in [-1,1]^m$ and $|y_j| \geq 1-\delta$ for half of the entries. 
However, one can obtain our version as follows: Start with $x \in [0,1]^m$. 
Then apply \cite{DiscrepancyMinimization-LovettMekaFOCS12} to $x' := 2x-\bm{1} \in [-1,1]^m$ to obtain $y' \in [-1,1]^m$ with $|v_i(y'-x')| \leq \lambda_i\|v_i\|_2$
for all $i \in [n]$ and half of the entries satisfying $|y_j'| \geq 1-\delta$ . Then $y := \frac{1}{2}(y'+\bm{1})$ has half of the entries $y_j \in [0,\frac{\delta}{2}] \cup [1-\frac{\delta}{2},1]$. 
Furthermore, 
$|v_i(y-x)| = \frac{1}{2}|v_i(y'-x')| \leq \frac{\lambda_i}{2}\|v_i\|_2$.}:
\begin{lemma}[Constructive partial coloring lemma~\cite{DiscrepancyMinimization-LovettMekaFOCS12}] \label{lem:ConstructivePartialColoring}
Let $x \in [0,1]^m$ be a starting point, $\delta > 0$ an arbitrary error parameter, $v_1,\ldots,v_n \in \setQ^{m}$ vectors
and $\lambda_1,\ldots,\lambda_n \geq 0$ parameters with 
\begin{equation} \label{eq:EntropyInConstructiveLemma}
\sum_{i=1}^n e^{-\lambda_i^2/16} \leq \frac{m}{16}.
\end{equation}
Then there is a randomized algorithm with expected running time $O(\frac{(m+n)^3}{\delta^2} \log( \frac{nm}{\delta}))$
to compute a vector $y \in [0,1]^m$ with 
\begin{itemize}
\item $y_j \in [0,\delta] \cup [1-\delta,1]$ for at least half of the indices $j \in \{ 1,\ldots,m\}$
\item $|v_iy - v_ix| \leq \lambda_i \cdot \|v_i\|_2$ for each $i \in \{1,\ldots,n\}$.
\end{itemize}
\end{lemma}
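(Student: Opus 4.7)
The plan is to deduce the lemma as a direct corollary of the original Lovett--Meka theorem, whose statement is essentially identical except that it works with starting points in $[-1,1]^m$ rather than $[0,1]^m$, and produces output entries whose absolute value is close to $1$ rather than entries close to either $0$ or $1$. The translation between the two versions is just an affine change of variables.

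Concretely, given $x \in [0,1]^m$, I would first set $x' := 2x - \bm{1} \in [-1,1]^m$ and invoke the original Lovett--Meka algorithm on input $x'$ with the same vectors $v_1,\ldots,v_n$, the same parameters $\lambda_i$, and the same error parameter $\delta$. Because the entropy hypothesis $\sum_{i=1}^n e^{-\lambda_i^2/16} \leq m/16$ does not involve $x$, it is unchanged, so the algorithm returns $y' \in [-1,1]^m$ with $|v_i(y'-x')| \leq \lambda_i \|v_i\|_2$ for every $i$ and with $|y_j'| \geq 1-\delta$ for at least $m/2$ indices $j$.

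I would then define $y := \tfrac{1}{2}(y'+\bm{1}) \in [0,1]^m$ and verify both guarantees. For the structural claim, any index $j$ with $|y_j'| \geq 1-\delta$ satisfies $y_j' \in [-1,-1+\delta]\cup[1-\delta,1]$, so that $y_j \in [0,\delta/2]\cup[1-\delta/2,1] \subseteq [0,\delta]\cup[1-\delta,1]$, as required. For the discrepancy claim, $y-x = \tfrac{1}{2}(y'-x')$, hence
\[
|v_i y - v_i x| \;=\; \tfrac{1}{2}|v_i y' - v_i x'| \;\leq\; \tfrac{\lambda_i}{2}\|v_i\|_2 \;\leq\; \lambda_i \|v_i\|_2.
\]

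The running time equals that of the Lovett--Meka procedure, since the pre- and post-processing consist only of a vector shift and scaling. There is no genuine obstacle here; the only point worth tracking carefully is that the rounding tolerance $\delta$ survives the affine change of variables without degradation, which it does (in fact we obtain $\delta/2$, slightly stronger than claimed). The lemma is thus a cosmetic reformulation of the cited black-box result, packaged so that downstream arguments in the bin packing application can operate directly on LP solutions with entries in $[0,1]$.
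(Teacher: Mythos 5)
Your proposal is correct and is essentially the paper's own argument: the paper derives the lemma in a footnote by exactly this affine substitution, applying Lovett--Meka to $x' = 2x - \bm{1}$ and setting $y = \tfrac{1}{2}(y' + \bm{1})$, obtaining the same slightly stronger tolerance $\delta/2$ and the factor-$\tfrac{1}{2}$ improvement in the discrepancy bound that you note.
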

If we end up with an almost integral  Bin Packing solution $y$, we can remove 
all entries with $y_j \leq \frac{1}{n}$ and roundup those with $y_j \in [1-\frac{1}{n},1]$ paying only an additional constant term.
Thus we feel free to ignore the $\delta$ term and assume that half of the entries are $y_j \in \{ 0,1\}$.

The algorithm in \cite{DiscrepancyMinimization-LovettMekaFOCS12} is based on a simulated Brownian motion
in the hypercube $[0,1]^m$ starting at $x$. Whenever the Brownian motion hits either the boundary
planes $y_j = 0$ or $y_j=1$ or one of the hyperplanes $v_i(x-y) = ± \lambda_i\|v_i\|_2$,  the Brownian motion
continues the walk in that subspace. By standard concentration bounds, the probability that the 
walk ever hits the $i$th hyperplane is upperbounded by $e^{-\Omega(\lambda_i^2)}$. In other words, 
condition~\eqref{eq:EntropyInConstructiveLemma} says that the expected number of hyperplanes
$v_i(x-y) = ± \lambda_i\|v_i\|_2$ that ever get hit is bounded by $\frac{m}{16}$, from which one can 
argue that a linear number of boundary constraints must get tight. 

Readers that are more familiar with approximation algorithm techniques than with discrepancy theory,
should observe the following: In the special case that $\lambda_i = 0$ for all $i$, one can easily prove 
Lemma~\ref{lem:ConstructivePartialColoring} by choosing $y$ as any basic
solution of $\{ y \mid v_iy=v_ix \; \forall i\in[n]; \; \; \bm{0} \leq y \leq \bm{1}\}$. In other words, Lemma~\ref{lem:ConstructivePartialColoring}
is somewhat an extension of the concept of basic solutions.
Considering that a significant fraction of approximation algorithms is based on the sparse support of 
basic solutions, one should expect many more applications of \cite{DiscrepancyMinimization-LovettMekaFOCS12}.

\subsection*{The rounding procedure}

Let $x$ be a fractional solution for the Gilmore-Gomory LP \eqref{eq:GilmoreGomory}, 
say with $|\textrm{supp}(x)| = m \leq n$ and let $A$ be the constraint matrix
reduced to patterns in the support of $x$. 

Assume for the sake of simplicity that all items have size between $\frac{1}{k}$ and $\frac{2}{k}$ for some $k$.
We now want to discuss how  Lemma~\ref{lem:ConstructivePartialColoring} can be applied in order to
replace $x$ with another vector $y$ that has half of the entries integral and is still almost feasible.
Then repeating this procedure for $\log(m)$ iterations will lead to a completely integral solution.
 For the sake of comparison: 
the Karmarkar-Karp algorithm is able to find another fractional $y$ that has at most half the support of $x$
and is at most an additive $O(1)$ term more costly. So let us argue how to do better. 

Let us sort the items according to their sizes (i.e. $\frac{2}{k} \geq s_1 \geq \ldots \geq s_n \geq \frac{1}{k}$) and 
partition the items into \emph{groups} $I_1,\ldots,I_t$
such that the number of incidences in $A$ is of order $100k$ for each group. In other words, if we 
abbreviate $v_{I_j} := \sum_{i \in I_j} A_i$ as the sum of the row vectors in $I_j$, then  $\|v_{I_j}\|_1 \approx 100k$.
Since each column of $A$ sums up to at most $k$ and each group consums $100k$ incidences, we have only $t \leq \frac{m}{100}$
many groups. Now, we can obtain a suitable $y$ with at most half the fractional entries by either computing a basic solution to the system 
\[
v_{I_j}(x-y) = 0 \;\; \forall j\in[t], \quad \bm{1}^Ty = \bm{1}^Tx,  \quad \bm{0} \leq y \leq \bm{1}
\]
or by applying the Constructive Partial Coloring Lemma to $v_{I_1},\ldots,v_{I_t}$ and $v_{\textrm{obj}}:=(1,\ldots,1)$ 
with a uniform parameter of $\lambda := 0$. In fact, since $(t+1) \cdot e^{-0^2/16} \leq \frac{m}{100}+1 \leq \frac{m}{16}$, 
condition~\eqref{eq:EntropyInConstructiveLemma} is even satisfied with a generous slack.
The meaning of the constraint $v_I(x-y) = 0$ is that $y$ still contains the right
number of slots for items in group $I$. But the constraint does not distinguish between
different items within $I$; so maybe $y$ covers the smaller items in $I$ more often than needed 
and leaves the larger ones uncovered. 
However, it is not hard to argue that after discarding $100k$ items, $y$ can be turned into a feasible solution, 
so the increase in the objective function is again $O(1)$ as for Karmarkar-Karp.

Now we are going to refine our arguments and use the power of the entropy method. 
The intuition is that we want to impose stronger conditions on the coverage of 
items \emph{within} groups.
Consider a group $I := I_j$ and 
create growing \emph{subgroups} $G_1 \subseteq G_2 \subseteq \ldots \subseteq  G_{1/\varepsilon} = I$ such that the number of
incidences grows by $\varepsilon \cdot 100k$ from subgroup to subgroup, for some $\varepsilon > 0$ (later $\varepsilon := \frac{1}{\log^2 n}$
will turn out to be a good choice; see Figure~\ref{fig:RoundingAndGluing}.$(a)$). In other words, $\|v_{G_{j+1} \backslash G_j}\|_1 \approx \varepsilon \cdot 100k$.
We augment the input for Lemma~\ref{lem:ConstructivePartialColoring} by the vectors $v_{G}$ 
for all subgroups equipped with parameter $\lambda_G := 4\sqrt{\ln(\frac{1}{\varepsilon})}$. Observe that
condition~\eqref{eq:EntropyInConstructiveLemma} is still satisfied as each of the $\frac{t}{\varepsilon}$ many subgroups $G$
contributes only $e^{-\lambda_G^2/16} \leq \varepsilon$. So, we can get a better vector $y$ that also satisfies 
$|v_G(x-y)| \leq 4\sqrt{\ln(\frac{1}{\varepsilon})} \cdot \|v_G\|_2$ for any subgroup. In order to improve over our previous approach
we need to argue that $\|v_G\|_2 \ll k$. But we remember that by definition $\|v_G\|_1 \leq 100k$, 
thus we obtain $\|v_G\|_2 \leq \sqrt{\|v_G\|_1 \cdot \|v_{G}\|_{\infty}} \leq 100k \cdot \sqrt{\|v_G\|_{\infty}/\|v_G\|_1}$.
In other words, the only situation in which we do not immediately improve over Karmarkar-Karp 
is if $\|v_G\|_{\infty} \geq \Omega( \|v_G\|_1)$, i.e. if there is some pattern such that a large fraction 
of it is filled with items of the same subgroup $G$.

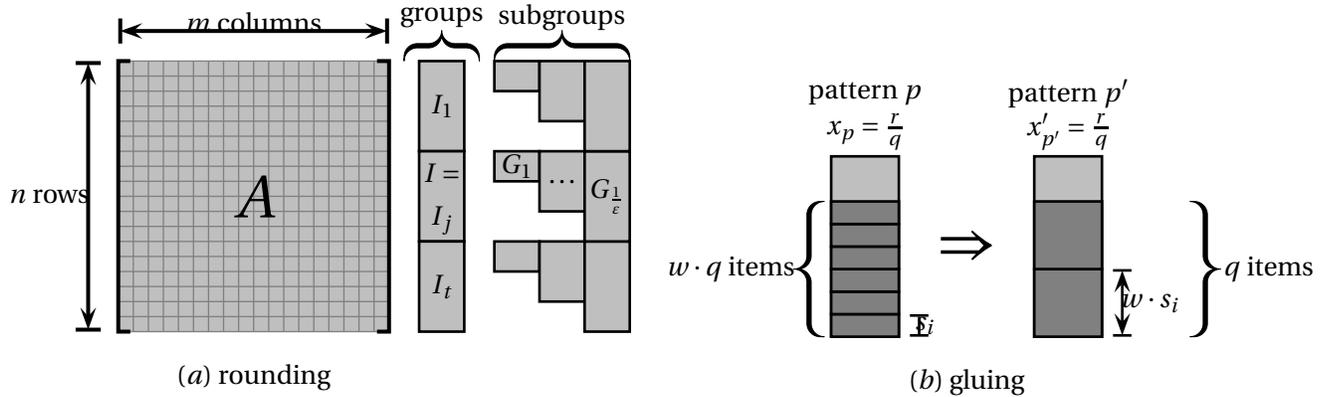
\begin{figure}[H]
\begin{center}
\psset{unit=0.4cm}
\begin{pspicture}(0,-1.3)(10,11)
\drawRect{fillstyle=solid,fillcolor=lightgray,linestyle=none}{0}{0}{9}{9} 
\multido{\N=0.0+0.5}{19}{ 
  \psline[linecolor=gray,linewidth=0.5pt](\N,0)(\N,9)
  \psline[linecolor=gray,linewidth=0.5pt](0,\N)(9,\N)
}
\psline(0.4,0)(0,0)(0,9)(0.4,9) 
\psline(8.6,0)(9,0)(9,9)(8.6,9) 
\rput[c](4.5,4.5){\Huge{$A$}}
\pnode(-1,0){N1} \pnode(-1,9){N2} \ncline{|<->|}{N1}{N2} \naput[labelsep=-2pt]{$n$ rows}
\pnode(0,10){M1} \pnode(9,10){M2} \ncline{|<->|}{M1}{M2} \naput[labelsep=-2pt]{$m$ columns}
\drawRect{linewidth=1pt,fillstyle=solid,fillcolor=lightgray}{10}{6}{1.5}{3} \rput[c](10.75,7.5){$I_1$}
\drawRect{linewidth=1pt,fillstyle=solid,fillcolor=lightgray}{10}{3}{1.5}{3} \rput[c](10.75,5.3){$I=$} \rput[c](10.75,3.7){$I_j$}
\drawRect{linewidth=1pt,fillstyle=solid,fillcolor=lightgray}{10}{0}{1.5}{3} \rput[c](10.75,1.5){$I_t$}
\multido{\N=2+3}{3}{
\drawRect{linewidth=1pt,fillstyle=solid,fillcolor=lightgray}{12.5}{\N}{1.5}{1} 
}
\multido{\N=1+3}{3}{
\drawRect{linewidth=1pt,fillstyle=solid,fillcolor=lightgray}{14}{\N}{1.5}{2} 
}
\multido{\N=0+3}{3}{
\drawRect{linewidth=1pt,fillstyle=solid,fillcolor=lightgray}{15.5}{\N}{1.5}{3} 
}
\rput[c](13.25,5.5){$G_1$}
\rput[c](14.75,5.0){$\ldots$}
\rput[c](16.25,4.5){$G_{\frac{1}{\varepsilon}}$}
\rput[c](4.5,-1.5){$(a)$ rounding}
\psbrace[rot=-90,ref=1C,nodesepB=-5pt,braceWidthInner=5pt,braceWidthOuter=5pt](17,9)(12.5,9){subgroups}
\psbrace[rot=-90,ref=1C,nodesepB=-5pt,braceWidthInner=5pt,braceWidthOuter=5pt](12.0,9)(9.5,9){groups}
\end{pspicture}
\psset{xunit=0.9cm,yunit=0.3cm,linewidth=1pt}
\begin{pspicture}(2,-1.5)(13,10.6)

\drawRect{fillstyle=solid,fillcolor=lightgray}{8}{0}{1}{8}
\drawRect{fillstyle=solid,fillcolor=black!50!white}{8}{0}{1}{1}
\drawRect{fillstyle=solid,fillcolor=black!50!white}{8}{1}{1}{1}
\drawRect{fillstyle=solid,fillcolor=black!50!white}{8}{2}{1}{1}
\drawRect{fillstyle=solid,fillcolor=black!50!white}{8}{3}{1}{1}
\drawRect{fillstyle=solid,fillcolor=black!50!white}{8}{4}{1}{1}
\drawRect{fillstyle=solid,fillcolor=black!50!white}{8}{5}{1}{1}

\drawRect{fillstyle=solid,fillcolor=lightgray}{11}{0}{1}{8}
\drawRect{fillstyle=solid,fillcolor=black!50!white}{11}{0}{1}{3}
\drawRect{fillstyle=solid,fillcolor=black!50!white}{11}{3}{1}{3}
\rput[c](10,4){\Huge{$\Rightarrow$}}
\psbrace[rot=180, ref=1C, nodesepA=-24pt,braceWidthInner=5pt,braceWidthOuter=5pt](7.9,6)(7.9,0){$w\cdot q$ items}
\psbrace[rot=0, ref=1C, nodesepA=18pt,braceWidthInner=5pt,braceWidthOuter=5pt](13.3,0)(13.3,6){$q$ items}
\pnode(9.3,0){A1} \pnode(9.3,1){A2} \ncline{|-|}{A1}{A2} \nbput[labelsep=-3pt]{$s_i$}
\pnode(12.3,0){B1} \pnode(12.3,3){B2} \ncline{|<->|}{B1}{B2} \nbput[labelsep=-2pt]{$w\cdot s_i$}
\rput[c](8.5,10.8){pattern $p$} \rput[c](8.5,9.1){$x_p = \frac{r}{q}$}
\rput[c](11.5,10.8){pattern $p'$} \rput[c](11.5,9.1){$x_{p'}' = \frac{r}{q}$}
\rput[c](10,-2.0){$(b)$ gluing}
\end{pspicture}
\caption{(a) visualization of groups and subgroups. (b) visualization of the gluing procedure. \label{fig:RoundingAndGluing}}
\end{center}
\end{figure}

\subsection*{The gluing}

At this point our gluing operation comes into play. After a simple \emph{pre-rounding step} which costs us a $o(1)$
term, we can assume that all entries in $x$ are multiples of $\frac{1}{q}:=\frac{1}{\log^4 n}$.
Recall that initially we have a single copy
from each item. We group consecutive items together into groups of size 
$\beta = \frac{1}{\log^4 n}$ and round their sizes to the smallest one in the group. 
By standard arguments this incurs a negligible cost of $O(\frac{1}{\log^4 n})$.
Now we can assume that we have a sufficiently large number of copies for every item. 
Suppose that after this agglomeration we find an item $i$ and a pattern $p$ in the support such that
indeed $p_i$ is large, say $p_is_i \geq \frac{1}{\log^8 n}$.
The crucial observation is that this pattern $p$ alone covers $w := \lfloor\frac{p_i}{q}\rfloor $ many copies of item $i$
in the input since $x_p\cdot p_i \geq w$.
Next, take $w$ many copies of item $i$ in $p$ and glue them together to obtain
a new, bigger item $i'$ of size $s_{i'} = w\cdot s_i$. The pattern $p$ has enough items
to do this $q$ times, see Figure~\ref{fig:RoundingAndGluing}.$(b)$.
In other words, the modified pattern now contains  $q$ copies of a new artificial item $i'$. 
The reason why we want $q$ copies of this new item is that the modified pattern $p$
alone covers $q \cdot x_p = 1$ copies of $i'$. Thus, in a finally obtained
integral solution we would have a slot for the artificial item $i'$, which we can then 
replace with the copies of the original item $i$. 

Observe that the size of this newly obtained item type is $s_{i'} = w \cdot s_i \geq \frac{1}{\log^{12} n}$.
So we call items above that size \emph{large} and below that size \emph{small}.
The interesting effect is that if we apply this gluing procedure to all small items
whenever possible, the penalty that we pay for rounding the remaining small items
is so small that the overall cost is completely dominated by the contribution of the large items
(i.e. those items that were either large from the beginning or that were created during the gluing
process). 
In other words, we obtain the same approximation
guarantee as if the instance would only contain items of size at least $\frac{1}{\log^{12} n}$ from the 
beginning on; for those instances already \cite{KarmarkarKarp82} produces a
solution with at most  $OPT_f + O(\log n \cdot \log \log n)$ bins, so this is our 
final approximation guarantee for all instances.



\subsection*{Contribution}

Our main contribution is the following theorem:
\begin{theorem} \label{thm:MainContribution}
For any Bin Packing instance $s_1,\ldots,s_n \in [0,1]$, one can compute 
a solution with at most $OPT_f + O(\log OPT_f \cdot \log \log OPT_f)$ bins in expected
time $O(n^6 \log^5(n))$, where $OPT_f$ denotes the optimum value of the Gilmore-Gomory LP relaxation.
\end{theorem}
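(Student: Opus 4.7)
The plan is to execute the program sketched in Section~2: compute an LP solution, apply a preprocessing \emph{gluing} step to eliminate pattern-concentrated small items, and iteratively round the remaining fractional solution using Lemma~\ref{lem:ConstructivePartialColoring}. I start by computing (via the ellipsoid or multiplicative weights method) a basic solution $x$ of the Gilmore-Gomory LP \eqref{eq:GilmoreGomory} with $\bm{1}^T x \leq OPT_f + o(1)$. Using the standard geometric splitting of item sizes into scales $[2^{-j-1}, 2^{-j}]$, and paying $O(\log n)$ extra bins which is absorbed into the final bound, I may assume all item sizes lie in a single interval $[1/k, 2/k]$. A small pre-rounding step then ensures every $x_p$ is a multiple of $1/q$ for $q = \log^4 n$.

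Next I perform the gluing construction of Figure~\ref{fig:RoundingAndGluing}$(b)$. After agglomerating the original items into blocks of $\beta = 1/\log^4 n$ copies --- rounding their sizes to the smallest representative within each block at negligible cost --- I apply the following operation: whenever a pattern $p$ with $x_p = r/q > 0$ contains an item $i$ with $p_i s_i \geq 1/\log^8 n$, I merge $w = \lfloor p_i/q \rfloor$ copies of $i$ into a single artificial item of size $w s_i \geq 1/\log^{12} n$ and record that this new item appears $q$ times in a modified pattern $p'$. After iterating this over all patterns, every remaining item of size below $1/\log^{12} n$ contributes only a $1/\polylog(n)$ fraction to any pattern; these residual \emph{small} items can then be packed separately at $O(1)$ additive cost. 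In effect I have reduced to the large-items case where $k = O(\log^{12} n)$.

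For the large-item instance I iterate the following step $O(\log m) = O(\log OPT_f)$ times: sort items by size, partition them into groups $I_1,\ldots,I_t$ with $\|v_{I_j}\|_1 \approx 100 k$, and within each $I_j$ form nested subgroups $G_1 \subseteq \cdots \subseteq G_{1/\varepsilon} = I_j$ with incidence increments $\|v_{G_{\ell+1} \setminus G_\ell}\|_1 \approx \varepsilon \cdot 100 k$ for $\varepsilon = 1/\log^2 n$. I then invoke Lemma~\ref{lem:ConstructivePartialColoring} on $x$ with the objective vector $\bm{1}$ and all group vectors $v_{I_j}$ given parameter $\lambda = 0$, and each subgroup vector $v_G$ given $\lambda_G = 4\sqrt{\ln(1/\varepsilon)} = O(\sqrt{\log \log n})$. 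Condition~\eqref{eq:EntropyInConstructiveLemma} holds because the $O(t/\varepsilon)$ subgroups each contribute only $e^{-\lambda_G^2/16} \leq \varepsilon$, summing to $\ll m/16$. The returned $y$ has at least half integral entries, preserves the cost exactly, preserves group coverage exactly, and misses each subgroup by at most $\lambda_G \|v_G\|_2 \leq O(\sqrt{\log \log n}) \cdot \sqrt{\|v_G\|_1 \|v_G\|_\infty}$. Because gluing guarantees $\|v_G\|_\infty \leq \|v_G\|_1 / \polylog(n)$, this error is $o(k)$ per subgroup, and repairing the deficits costs $O(\log \log n)$ extra bins per iteration. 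Iterating $O(\log OPT_f)$ rounds yields an integral solution with the claimed $OPT_f + O(\log OPT_f \cdot \log \log OPT_f)$ bins, and the $O(n^6 \log^5 n)$ running time follows from $O(\log n)$ invocations of Lemma~\ref{lem:ConstructivePartialColoring}, each of cost $O((m+n)^3 \log(nm))$.

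The main obstacle will be invariant maintenance across iterations: after each rounding step I must re-establish the ``$x_p$ is a multiple of $1/q$'' condition needed to re-apply gluing, together with the gluing guarantee itself ($\|v_G\|_\infty \ll \|v_G\|_1$) needed to make the partial-coloring discrepancy bound non-trivial. A secondary delicacy is the conversion of subgroup deficits into actual bin cost: a deficit in subgroup $G$ reflects uncovered items in $G$ and one must pay for extra bins proportional only to the $\ell_1$-deficit, carefully avoiding double-counting across the nested family $G_1 \subseteq \cdots \subseteq G_{1/\varepsilon}$. Once both are in place, the per-round loss of $O(\log \log OPT_f)$ multiplies cleanly with the $O(\log OPT_f)$ rounds to give the final bound.
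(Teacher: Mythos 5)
Your proposal retraces the paper's informal outline (Section 2), but several steps that you treat as routine are either incorrect or are precisely the technical content that the actual proof has to supply. First, the reduction ``pay $O(\log n)$ bins and assume all sizes lie in one interval $[\frac{1}{k},\frac{2}{k}]$'' does not work: a pattern mixes items from many size classes, so splitting the fractional solution by scale multiplies the cost (up to a $\log n$ factor), and rounding each scale separately would also multiply the per-scale rounding loss by the number of scales. The paper never makes this reduction; instead Theorem~\ref{thm:ConstructivePartialColoringForBinPacking} handles all size classes in a \emph{single} application of Lemma~\ref{lem:ConstructivePartialColoring}, with groups confined to one size class each and the entropy budget shared across classes, and the repair cost in the prefix constraints is paid per size class ($O(1)$ for the $O(\log\log n)$ large classes, $O(\sqrt{\delta\ln(2/\delta)})$ for the small ones). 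Second, your claim that after one gluing pass the residual small items ``can be packed separately at $O(1)$ additive cost'' is false --- their total size can be $\Omega(OPT_f)$. The paper does not reduce to a large-items instance; it keeps the small items and only ensures they are $\delta$-well-spread, and even that requires \emph{iterating} grouping and gluing $O(\log n)$ times (Lemma~\ref{lem:WellSpreadInstance}), because gluing can destroy the coverage lower bound $s_iA_ix\geq\beta$ that the well-spread conclusion rests on; a single pass as you describe does not give $\|A_i\|_\infty\leq\delta\|A_i\|_1$.

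Third, what you defer as ``the main obstacle'' --- re-establishing the multiple-of-$\frac1q$ property and well-spreadness after every partial-coloring round, and converting prefix deficits into bins without cost blow-up over $\log n$ rounds --- is exactly the heart of the paper's proof, not a loose end: Lemma~\ref{lem:RoundingToMultipleOfGamma} re-rounds to multiples of $\gamma$ at cost $O(\gamma\log^2 n)$ per round, Lemma~\ref{lem:WellSpreadInstance} costs $O(\beta\log^2 n)$ per round, and only the choice $\gamma=\beta=\delta=\frac{1}{\log^4 n}$, $\varepsilon=\Theta(\frac{1}{\log^{12}n})$ makes all of these, summed over $\log n$ iterations, negligible next to the $O(\log n\cdot\log\log n)$ from the large classes (Theorem~\ref{thm:LogNLogLogNgap}). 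You also omit the machinery that makes gluing sound at all --- the reformulation \eqref{eq:BinPackingReformulation}, the $\succeq$ relation and transportation argument (Lemma~\ref{lem:TransportationProblem}), and Corollary~\ref{cor:SequenceRoundingAndGluing}, which guarantee that an integral solution of the transformed instance can be pulled back to the original one --- and the reduction from a bound in $n$ to a bound in $OPT_f$ (Lemma~\ref{lem:AssumptionSizesAtLeast1-n}), which is needed since your argument at best yields $OPT_f+O(\log n\cdot\log\log n)$. As written, the proposal is an outline with genuine gaps rather than a proof.
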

This partly solves problem $\#3$ in the list of 10 open problems in approximation algorithms stated
by Williamson and Shmoys~\cite{DesignOfApproxAlgosWilliamsonShmoys} (they asked for a constant integrality gap).

\section{Related work}

The classical application of the partial coloring lemma is to
find a coloring $\chi : [m] \to \{ ± 1\}$ for $m$ elements such that for a given set system $S_1,\ldots,S_n$\footnote{The standard notation in 
discrepancy theory is to have $n$ as number of elements and $m$ as the number of sets. 
However, that conflicts with the standard notation for Bin Packing, where $n$ is the number of items
which is essentially the number of $v$-vectors.} 
the \emph{discrepancy} $\max_{i \in [n]} |\sum_{j \in S_i} \chi(j)|$ is minimized. For example, one can obtain Spencer's bound~\cite{SixStandardDeviationsSuffice-Spencer1985}
on the discrepancy of arbitrary set systems, by
applying $\log m$ times Lemma~\ref{lem:ConstructivePartialColoring} starting with $x := (\frac{1}{2},\ldots,\frac{1}{2})$ and a 
uniform bound of  $\lambda := C\sqrt{\log \frac{2n}{m}}$ where $v_i \in \{ 0,1\}^{m}$ is the characteristic vector of $S_i$.
This results in a coloring $\chi :[m] \to \{ ± 1\}$ with $|\chi(S)| \leq O(\sqrt{m\log \frac{2n}{m}})$. 
Note that e.g. for $n \leq O(m)$, this is a $O(\sqrt{m})$ coloring, while a pure random coloring
would be no better than $O(\sqrt{m \cdot \log m})$.

Other applications of this method give
a $O(\sqrt{t} \log m)$ bound if no element is in more than $t$ sets~\cite{DiscrepancyBound-sqrtT-logN-SrinivasanSODA97} and a $O(\sqrt{k}\log m)$ bound for the discrepancy of $k$ permutations~\cite{DiscrepancyOfPermutations-SpencerEtAl}. For the first quantity, alternative proof techniques
give bounds of $2t-1$~\cite{IntegerMakingTheorems-BeckFiala81} and $O(\sqrt{t \cdot \log m})$~\cite{BalancingVectors-Banaszczyk98}.

In fact, we could use those classical techniques and extend \cite{EntropyMethodRothvoss-SODA12}
to obtain a $OPT_f + O(\log OPT_f \cdot \log \log OPT_f)$ \emph{integrality gap} result.
It might appear surprising that one can bound integrality gaps by coloring
matrices, but this is actually a well known fact, which is 
expressed by the Lovász-Spencer-Vesztergombi Theorem~\cite{DiscrepancyofSetSystemsAndMatrices-LSV86}: 
Given a matrix $A$ and a vector $x \in [0,1]^m$ such that any submatrix of $A$ admits a 
discrepancy $\alpha$ coloring. Then there is a $y \in \{ 0,1\}^m$ with $\|Ax - Ay\|_{\infty} \leq \alpha$.
For a more detailed account on discrepancy theory, we 
recommend Chapter~4 in the book of Matou{\v{s}}ek~\cite{GeometricDiscrepancy-Matousek99}.


\section{Preliminaries}

In the Bin Packing literature, it is well known that
it suffices to show  bounds as in Theorem~\ref{thm:MainContribution}
with an $n$ instead of $OPT_f$ and that one can also assume that items are not too tiny,
e.g. $s_i \geq \frac{1}{n}$. 
Though the following 
arguments are quite standard (see e.g. \cite{KarmarkarKarp82}), we present them for the sake
of completeness.
\begin{lemma} \label{lem:AssumptionSizesAtLeast1-n}
Assume for a monotone function $f$, there is a $\textrm{poly}(m)$-time $OPT_f + f(n)$ algorithm for Bin Packing instances $s \in [0,1]^m$ with $|\{ s_i \mid i \in [m]\}| \leq n$
many different item sizes and $\min\{ s_i \mid i \in [m]\} \geq \frac{1}{n}$. Then there is a polynomial time algorithm that finds 
a solution with at most 
$OPT_f + f(OPT_f) + O(\log OPT_f)$ bins.
\end{lemma}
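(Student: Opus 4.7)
The plan is to reduce an arbitrary Bin Packing instance to one satisfying the hypothesis of the lemma with parameter $n=O(OPT_f)$, at an additive cost of $O(\log OPT_f)+O(1)$ bins. First I would use the LP-solving machinery mentioned in the introduction to compute, in polynomial time, an approximate value $M$ with $OPT_f\leq M\leq OPT_f+1$. Then I would separate the items into \emph{tiny} ones with $s_i<1/M$ (set aside for later) and \emph{large} ones with $s_i\geq 1/M$. Since the large items have total volume at most $OPT_f\leq M$ and each has size at least $1/M$, there are at most $M^2$ of them, and they may carry as many as $M^2$ distinct sizes.

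Next I would apply the grouping procedure of Lemma~\ref{lem:Grouping} with parameter $\beta=1$, as invoked in the footnote following equation \eqref{eq:SetCoverLP}, to the fractional Gilmore--Gomory solution restricted to the large items. This should reduce the number of distinct sizes to at most $M$ (while preserving the lower bound $1/M$ on item sizes) and increase the fractional cost by only $O(\log M)=O(\log OPT_f)$. The resulting instance satisfies the hypothesis of the lemma with $n=M$, so the hypothesized algorithm produces an integral packing of that instance using at most $\bigl(OPT_f+O(\log OPT_f)\bigr)+f(M)\leq OPT_f+f(OPT_f)+O(\log OPT_f)$ bins. Undoing the rounding only shrinks item sizes, so the packing remains feasible.

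Finally I would reinsert the tiny items via First Fit Decreasing. If no new bins are opened, no additional cost is incurred; otherwise every bin but possibly the last must be filled to more than $1-1/M$ (since a tiny item has size below $1/M$ and would have fit otherwise), and a volume argument then bounds the total number of bins in that case by $OPT_f/(1-1/M)+1=OPT_f+O(1)$. Taking the maximum with the bound from the previous step preserves the estimate $OPT_f+f(OPT_f)+O(\log OPT_f)$.

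The main obstacle I anticipate is the grouping step: one needs a scheme that simultaneously bounds the number of distinct sizes by $M$, maintains the $1/M$ lower bound on minimum size, and pays only $O(\log M)$ extra in the LP objective. This is meant to be the content of Lemma~\ref{lem:Grouping} with $\beta=1$, typically realized via a geometric (doubling) bundling of items in the spirit of Karmarkar--Karp or de la Vega--Luecker, but the precise instantiation and parameter choice are where the real work sits. A minor concern is that $OPT_f$ is known only approximately; this is handled by the $(1+o(1))$-approximation to the LP value available in polynomial time, so that the separating threshold $1/M$ is off by at most a constant factor, which only affects the hidden constants above.
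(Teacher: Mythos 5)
Your overall skeleton is the same as the paper's proof of Lemma~\ref{lem:AssumptionSizesAtLeast1-n}: split the items at a threshold of order $1/OPT_f$, group the large items so that the number of distinct sizes and the reciprocal of the minimum size are both bounded by roughly $OPT_f$ at an extra cost of $O(\log OPT_f)$, invoke the assumed algorithm, and then add the small items greedily, using the volume argument that if a new bin is opened then all but one bin is more than $1-1/M$ full, giving $OPT_f+O(1)$ bins in that case. The paper does exactly this, except that it uses $\sigma:=\sum_i s_i\leq OPT_f$ as the threshold parameter instead of an LP-computed $M$, which is slightly cleaner: it needs no LP solve to define the split, and it gives $f(\sigma)\leq f(OPT_f)$ directly rather than $f(M)\leq f(OPT_f+1)$.

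The one genuine problem is in your grouping step, where your citation and your justification do not match. Lemma~\ref{lem:Grouping} operates on the \emph{fractional solution}, replaces items within each weight group by the \emph{smallest} size in the group, and compensates by adding copies of the largest item to the waste; feasibility for the original items is then recovered not because ``undoing the rounding only shrinks item sizes'' (it enlarges them, since the grouped sizes were rounded down), but via the domination relation $x'\succeq x$ together with the slot-assignment argument of Lemma~\ref{lem:TransportationProblem}. Moreover, if you go this route, the ``instance'' you hand to the assumed algorithm has demand vector $Ax'$, which need not be integral, whereas the hypothesized algorithm takes an explicit item list $s\in[0,1]^m$; you would have to say how to convert it. The justification you wrote is the one that fits the Karmarkar--Karp/de la Vega--Lueker item-level grouping, where sizes within each group are rounded \emph{up} to the largest member, the first group of each size class (total size $O(1)$ per class, hence $O(\log\sigma)$ overall) is discarded into $O(\log OPT_f)$ extra bins, and the rounded instance has fractional optimum at most $OPT_f$; that is the variant the paper actually uses, and either fix closes the gap.
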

\begin{proof}
Let $s \in [0,1]^m$ be any bin packing instance and define $\sigma := \sum_{i=1}^m s_i$ as their size.
First, split items into large ones $L := \{ i \in [m] \mid s_i \geq \frac{1}{\sigma}\}$
and small ones $S := \{ i \in [m] \mid s_i < \frac{1}{\sigma}\}$.

We perform the grouping procedure from \cite{KarmarkarKarp82} (or from Lemma~\ref{lem:Grouping}) 
to large items $L$ 
and produce an instance with sizes $s'$ such that each size $s_i'$ that appears 
has $\sum_{j: s_j'=s_i'} s_i' \geq 1$. Moreover, after discarding items of total size at 
most $O(\log \frac{1}{\min\{ s_i \mid i \in L\}}) \leq O(\log \sigma)$
one has $OPT_f' \leq OPT_f$. 
Thus the number of different item sizes in $s'$ is bounded by $\sigma$. 
We run the assumed algorithm to assign items in $L$ to at most $OPT_f' + f(\sigma) \leq OPT_f + f(OPT_f)$
bins (using that $OPT_f \geq \sigma$ and $f$ is monotone). Adding the discarded items increases the objective function
by at most another $O(\log OPT_f)$ term.
Now we assign the small items greedily over those bins. 
If no new bin needs to be opened, we are done. 
Otherwise, we know that the solution consists of $k$ bins such that  $k-1$ bins are at least $1 - \frac{1}{\sigma}$ full. This implies 
$\sigma \geq (k-1) \cdot (1 - \frac{1}{\sigma})$, and hence $k \leq \sigma + 3 \leq OPT_f + 3$ assuming $\sigma \geq 2$.
\end{proof}
From now on, we have the implicit assumption $s_i \geq \frac{1}{n}$. 
In an alternative Bin Packing definition, also called the \emph{cutting stock problem}, the input 
consists of a pair $(s,b)$ 
such that $b_i \in \setZ_{\geq0}$ gives the number of copies of $s_i$. 
The Karmarkar Karp bound of $O(\log^2 n)$ on the additive integrality gap
still holds true in this general setting, where $n$ is the number of item types. 
Note that the time to solve the LP (\ref{eq:GilmoreGomory}) up to an additive constant
is polynomial in $\sum_{i=1}^n b_i$. 
In this paper, we will work with a more general formulation in which 
any $b \in \textrm{cone}\{ p \in \setZ_{\geq0}^n \mid s^Tp \leq 1\}$ may serve as vector of multiplicities
(note that such a vector might have fractional entries).
From our starting solution $x$, we can immediately remove the integral parts $\lfloor x_p\rfloor$
and assume that $\bm{0} \leq x < \bm{1}$, which has the consequence that $\sum_{i=1}^n s_ib_i < n$.


It will be useful to reformulate bin packing as follows: 
consider a size vector $s \in [0,1]^n$ ($s_1 \geq \ldots \geq s_n$) with pattern matrix $A$ 
and a given vector $x \in \setR_{\geq0}^m$ as input 
and aim to solve the following problem
\begin{eqnarray}
 \min \bm{1}^Ty & & \quad \label{eq:BinPackingReformulation} \\
\sum_{j\leq i} A_jy &\geq& \sum_{j\leq i} A_jx \quad \forall i \in [n] \nonumber \\
y &\in& \setZ_{\geq0}^{m} \nonumber
\end{eqnarray}
We write $y \succeq x$ if $\sum_{j\leq i} A_jy \geq \sum_{j\leq i} A_jx$ for all $i \in [n]$. 
In words: we have a fractional solution $x$ to LP~\eqref{eq:GilmoreGomory} for an instance with $A_ix$
many items of type $i$ in the input and aim to find an integral solution $y$
that reserves $\sum_{j\leq i} A_jy$ many slots for items of type $1,\ldots,i$.
The condition $y \succeq x$ guarantees that $y$ can be easily transformed into 
a feasible solution by simply assigning items to slots of larger items.
We make the following observation:
\begin{observation}
Consider any instance
 $1\geq s_1 \geq \ldots \geq s_n > 0$ with pattern matrix $A$ and vector $x \in \setR_{\geq0}^m$ such that $Ax = \bm{1}$. 
Then the value of the optimum integral solution to \eqref{eq:BinPackingReformulation} and \eqref{eq:GilmoreGomory} coincide.
\end{observation}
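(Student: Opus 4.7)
The plan is to convert any integer solution $y$ of \eqref{eq:BinPackingReformulation} into an integer solution $y'$ of \eqref{eq:GilmoreGomory} with $\bm{1}^T y' = \bm{1}^T y$ by reshuffling the slots inside the patterns selected by $y$ via a bipartite matching. Because $Ax = \bm{1}$, the condition $y \succeq x$ becomes $\sum_{j \leq i} A_j y \geq i$ for every $i \in [n]$. One direction is then immediate: any integer $y$ with $Ay = \bm{1}$ satisfies $\sum_{j\leq i} A_j y = i$, and so is feasible for \eqref{eq:BinPackingReformulation} at the same cost.

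For the non-trivial direction I would build the bipartite graph $H$ whose left side consists of the $n$ input items and whose right side contains, for each $j \in [n]$, exactly $A_j y$ nodes representing the ``type-$j$ slots'' in the multiset of patterns picked by $y$ (a pattern $p$ contributes $p_j$ slots of type $j$, with multiplicity $y_p$). Connect input item $i$ to every type-$j$ slot with $j \leq i$; since $s_i \leq s_j$, such a slot has enough capacity to hold item $i$. The goal is a matching that saturates all $n$ left vertices.

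To verify Hall's condition, observe that the neighborhood of any $S \subseteq [n]$ equals the set of type-$j$ slots with $j \leq \max(S)$, so $|N(S)| = \sum_{j \leq \max(S)} A_j y$. The worst case is $S = \{1, \ldots, i\}$, for which Hall's inequality reads $\sum_{j \leq i} A_j y \geq i$, which is exactly the feasibility hypothesis. Hence a saturating matching exists, and from it I would construct $y'$ by replacing each copy of each pattern $p$ appearing in $y$ with the pattern $p'$ that records, for each $i \in [n]$, how many of $p$'s slots the matching assigns to item $i$. Every type-$j$ slot is then occupied by an item of size $\leq s_j$, so $s^T p' \leq s^T p \leq 1$ and $p'$ remains a valid pattern; summing over all copies yields $A y' = \bm{1}$ and $\bm{1}^T y' = \bm{1}^T y$. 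The main obstacle I anticipate is precisely the observation that Hall's condition over all $2^n$ subsets collapses to the $n$ prefix constraints thanks to the monotone structure of the neighborhoods; once this is in hand, the construction of $y'$ and the equality of optimum values is routine bookkeeping.
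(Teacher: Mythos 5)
Your proof is correct and follows essentially the same route the paper takes: the paper leaves the observation unproved but justifies it via the adjacent supply-and-demand graph machinery (Lemma~\ref{lem:TransportationProblem}), which is exactly your bipartite item-to-slot graph with edges $(u_i,v_j)$ for $i\geq j$ and the reduction of Hall's condition to the prefix inequalities $\sum_{j\leq i}A_jy\geq\sum_{j\leq i}A_jx$. Your specialization to $Ax=\bm{1}$ (unit demands, so Hall's theorem directly gives an integral assignment and hence valid patterns $p'$ with $s^Tp'\leq s^Tp\leq 1$) is precisely the intended argument.
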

However,  \eqref{eq:BinPackingReformulation} has the advantage that we can split the solution $x = x' + x''$   
and then separately consider $x'$ and $x''$ while the vector $b' = Ax'$ might be fractional,
which is somewhat unintuitive when speaking about classical bin packing.
When $Ax \in \setZ_{\geq0}^n$ and $y$ with $y \succeq x$ is integral, then it is clear that
$y$ defines a solution in which each item represented by multiplicity vector 
$Ax$ can be mapped to one slot in the patterns of $y$. 

\subsection*{Notation}

To fix some notation, $p$ denotes a pattern which we interpret either
as a multi-set of items or
as a vector where $p_i \in \setZ_{\geq 0}$ 
denotes the number of copies of item $i$ contained in $p$. 
The matrix formed by all possible patterns is denoted by $A$. 
Moreover $A_i$ is the $i$th row of $A$ and
by a slight abuse of notation, sometimes we interpret $p$ as a column index 
for pattern $p$ and write $A^p$ as the $p$th column. 
As usual $[n] = \{ 1,\ldots,n\}$ and $\bm{1}$ denotes the all-ones vector of suitable
dimension. For a subset $I \subseteq [n]$, we write $s(I) := \sum_{i \in I} s_i$.
For any $k$ that is a power of 2, we denote the subset of items $\{ i \in [n] \mid \frac{1}{k} \leq s_i < \frac{2}{k}\}$ as one \emph{size class}.
The quantity $m$ will usually refer to the number of patterns in $A$. 

%
\section{Operations on fractional solutions}

We introduce two useful operations that we can apply to a fractional solution: 
the classical \emph{item grouping procedure} similar to \cite{deLaVegaLueker81,KarmarkarKarp82}
and a novel \emph{item gluing} operation. Finally, we show how they can be combined to
obtain a \emph{well spread} instance in which no pattern contains a significant fraction
of copies of a single item.

In order to keep the maintained solution feasible in these procedures 
it will be necessary to add some additional patterns.
In the classical literature~\cite{deLaVegaLueker81,KarmarkarKarp82} 
this would be done with the phrase
``discard the following set of items\ldots'' meaning that those items
are assigned to separate bins in a greedy manner. 
We choose to handle this slightly differently. 
We allow additional columns
in $A$ -- for each $i \in [n]$, we add a \emph{waste pattern} $\{i\}$, which can be
bought in arbitrary fractional quantities at cost $2s_i$ per copy. 
For a vector $x \in \setR_{\geq0}^m$ representing a fractional solution,
we write
\[
  (\bm{1},2s)^Tx = \sum_{p\textrm{ regular pattern}} x_p + \sum_{\{i\}\textrm{ waste pattern}} 2s_ix_{\{i\}}
\]
as objective function. During our rounding algorithm, we do not 
make any attempt to round entries $x_{\{i\}}$ belonging to 
waste patterns to integral values. This can be easily done
at the very end as follows:
\begin{lemma} \label{lem:RoundingFractionalWastePatterns}
Let $x \in \setR^m_{\geq0}$ and suppose that all patterns  $p \in \supp(x)$ contain 
only one item, i.e. $\| p \|_1 = 1$. Then there is an integral $y \succeq x$ with $\sum_{i=1}^n s_iy_{\{i\}} \leq \sum_{i=1}^{n} s_ix_{\{i\}} + 1$.
\end{lemma}
\begin{proof}
By adding dummy copies, we may assume that $x_p = \frac{1}{q}$ for all $p$
(for some large number $q$). Sort the patterns $p_1,\ldots,p_{Q}$ such that the item sizes in those
patterns are non-increasing. Buy each $q$th pattern starting with $p_q$ plus 
one copy of $p_1$.
\end{proof}
Finally, any set of Bin Packing items $S \subseteq [n]$ 
can be assigned to at most $2 \sum_{i \in S} s_i + 1$ bins
using a First Fit assignment, which is the reason for the
penalty factor of $2$ for waste patterns.


\subsection{Grouping}

The operation of \emph{grouping items} is already defined by de la Vega and Luecker in 
their asymptotic PTAS for Bin Packing~\cite{deLaVegaLueker81}. 
For some parameter $k$, they form groups of $k$ input items each and round up 
the item sizes to the size of the largest item in that group. This essentially reduces
the number of different item types by a factor of $k$. In contrast, we will 
replace items in the \emph{fractional solution} $x$ with \emph{smaller} items.
The reason for our different approach is that we measure progress in our algorithm
in terms of $|\supp(x)|$, while e.g. Karmarkar-Karp measure the progress in
terms of the total size of remaining input items. As a consequence we have to be careful that
no operation increases $|\supp(x)|$.
\begin{lemma}[Grouping Lemma] \label{lem:Grouping}
Let $x \in \setR_{\geq0}^m$ be a vector, $\beta > 0$ any parameter   
and $S \subseteq [n]$ be a subset of items.
Then there is an $x' \succeq x$ with identical fractionality as $x$ (except of waste patterns) 
with  $(\bm{1},2s)^Tx' \leq (\bm{1},2s)^Tx + O(\beta \cdot \log (2\max_{i,j \in S} \{ \frac{s_i}{s_j} \})$, and for any $i \in S$, 
either $A_ix' = 0$ or $s_iA_ix' \geq \beta$.
\end{lemma}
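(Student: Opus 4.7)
The plan is a two-level decomposition: partition $S$ into $O(\log R)$ dyadic size classes (where $R := \max_{i,j \in S} s_i/s_j$) and within each class apply a Karmarkar--Karp style grouping that costs only $O(\beta)$ per class in waste patterns. I would process classes sequentially, at each step modifying only the coverage of items in the current class, preserving $\succeq$ and the fractional structure of non-waste patterns, and accumulating $O(\beta)$ extra cost; composition then gives the claimed bound $O(\beta \log R)$.

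Fix a class $C_t := \{i \in S : 2^{-t-1} < s_i \le 2^{-t}\}$ and list its items by decreasing size as $i_1, \ldots, i_\ell$. I sweep through the list accumulating the size-weighted coverage $T_a := \sum_{b \le a} s_{i_b}(A_{i_b}x)$ and close off a group every time $T$ crosses a multiple of $\beta$. Any singleton $\{i\}$ with $s_i(A_i x) \ge \beta$ already satisfies the conclusion and is left as its own group. This sweep produces full groups $G_1, \ldots, G_{k-1}$ with size-weighted coverage in $[\beta, 2\beta)$, plus a tail $G_k$ whose total is less than $\beta$.

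For each full group $G_r = \{i_{a_r}, \ldots, i_{b_r}\}$ I would promote all its coverage onto the largest representative $i_{a_r}$: set $A_{i_{a_r}} x' := \sum_{j=a_r}^{b_r} A_{i_j}x$ and $A_{i_j} x' := 0$ for $j > a_r$. Because coverage only migrates upward in size within a group, the decreasing-size prefix sums only grow, so $x' \succeq x$ holds. Because $G_r$'s total size-weighted coverage is at least $\beta$ and sizes within $C_t$ differ by at most a factor $2$, $s_{i_{a_r}} A_{i_{a_r}} x' \ge \beta$. The tail $G_k$ cannot be promoted to meet the threshold and is discarded into waste patterns $\{i\}$ for each $i \in G_k$, at cost $\sum_{i \in G_k} 2 s_i (A_i x) \le 2\beta = O(\beta)$ per class.

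The main obstacle is realizing these demand-level modifications as an honest vector $x' \in \setR_{\ge 0}^m$ consistent with the fractional-structure constraint, since I am not free to partially cancel a fractional non-waste pattern. I would handle this by rewriting each non-waste pattern $p \in \supp(x)$ in place, relabeling all copies of $\{i_{a_r+1}, \ldots, i_{b_r}\}$ in $p$ as copies of $i_{a_r}$ while keeping $x_p$ unchanged, so the fractional parts are preserved; the dyadic size-class decomposition is used precisely to ensure that this relabeling at most doubles the pattern's capacity usage, and any overflow above $1$ can be trimmed into a waste pattern $\{i_{a_r}\}$. The delicate point, which I expect to be the technical heart of the argument, is amortizing the per-pattern trim costs into $O(\beta)$ per class rather than $O(\beta)$ per group --- the intended charging is that excess mass extracted from non-tail groups is funneled back into the single tail $G_k$ (rather than billed to fresh waste), so that only the already-budgeted $O(\beta)$ discard of $G_k$ is paid per size class.
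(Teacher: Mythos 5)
Your outer structure matches the paper: reduce to a single dyadic size class (paying $O(\beta)$ per class, hence $O(\beta\log R)$ overall), form consecutive groups of size-weighted coverage $\Theta(\beta)$, discard the $\le\beta$ tail to waste, and realize everything by relabeling items inside the existing patterns so that the $x_p$ values (and hence the fractionality) are untouched. The gap is in the direction of the relabeling. You promote every group's coverage \emph{up} to its largest representative, which indeed makes $x'\succeq x$ and the threshold $s_{i_{a_r}}A_{i_{a_r}}x'\ge\beta$ immediate, but it inflates pattern sizes, and the resulting overflow is not an $O(\beta)$-per-class quantity: within a class the relabeling can increase the size used in essentially every pattern by up to a factor $2$ (e.g.\ patterns filled with items of size $\approx 1/k$ whose group representative has size $\approx 2/k$), so the total mass that must be trimmed to restore $A^ps\le 1$ is of order $\sum_{i\in S\cap C_t}s_iA_ix$, i.e.\ proportional to the whole covered size of the class. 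Moreover the trimmed copies cannot simply be deleted --- they are exactly the slots standing in for the demands of the smaller items of their group, so dropping them breaks $\succeq$, and buying them as waste costs their full size. Your proposed amortization (funnel the excess into the tail group) cannot absorb this: the tail's budget is $O(\beta)$, while the overflow can be $\Omega(\bm{1}^Tx)$ restricted to the class. So as written the per-class cost is not $O(\beta)$ and the lemma does not follow.

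The paper resolves precisely this tension by rounding in the opposite direction: within each group (of weight in $[2\beta,4\beta]$, to survive the factor-$2$ size loss) every item is replaced by the \emph{smallest} item type of its group, so all patterns remain feasible and no trimming is ever needed. The price is that rounding down can only hurt the prefix sums $\sum_{j\le i}A_jx$, but for any fixed prefix $\{1,\dots,i\}$ at most \emph{one} group straddles the boundary $i$, so the deficit is at most one group's weight, i.e.\ at most $4\beta/\alpha$ items of the class; adding that many (possibly fractional) copies of the largest class item to the waste, at cost $O(\beta)$, repairs $x'\succeq x$ for \emph{all} prefixes simultaneously. That single observation --- down-rounding is free for feasibility and its damage to $\succeq$ is bounded by one group per prefix --- is the missing idea; with it, your sweep, tail discard, and class-by-class composition go through essentially as you set them up.
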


\begin{proof}
It suffices to consider the case in which $\alpha \leq s_i \leq 2\alpha$ for all $i \in S$ and
show that the increase in the 
objective function is bounded by $O(\beta)$. 
The general
case follows by applying the lemma to all size classes
$S \cap \{ i \in [n] \mid (\frac{1}{2})^{\ell+1} < s_i \leq (\frac{1}{2})^{\ell}\}$.
We also remove those items that have already $s_iA_ix\geq\beta$ from $S$
since there is nothing to do for them.

In the following, we assume that items are sorted according to their sizes.
We consider the index set $I := \{(i,p) \mid i \in S, p \in \supp(x)\}$. 
For any subset $G \subseteq I$, we define the weight as $w(G) := \sum_{(i,p) \in G} s_ip_ix_p$.
Note that any single index has weight $w(\{(i,p)\}) = s_ip_ix_p \leq s_iA_ix \leq \beta$ by assumption.
Hence we can partition $I = G_1 \dot{\cup} \ldots \dot{\cup} G_r$ such that 
\begin{itemize}
\item $w(G_{k}) \in [2\beta,4\beta] \; \forall k=1,\ldots,r-1$
\item $w(G_r) \leq 2\beta$
\item $(i,p) \in G_k, (i',p') \in G_{k+1} \Rightarrow i\leq i'$
\end{itemize}
Now, for each $k \in \{ 1,\ldots,r-1\}$ and each index $(i,p) \in G_k$, we replace items of
type $i$ in $p$ with the \emph{smallest} item type that appears in $G_k$.
Furthermore, for indices $(i,p) \in G_r$, we remove items of type $i$ from $p$.
Finally, we add $4\frac{\beta}{\alpha}$ many copies of the largest item in $I$ to the waste 
(note that the number $4\frac{\beta}{\alpha}$ can be fractional and even $4\frac{\beta}{\alpha}\ll1$ 
is meaningful). Let $x'$ denote the emerging solution. Clearly, $x'$ only uses 
patterns that have size at most $1$. Moreover, $(\bm{1}^T,2s)x' - (\bm{1}^T,2s)x \leq 4\frac{\beta}{\alpha} \cdot 2\max\{ s_i \mid i \in S\} \leq 16\beta$. 

It remains to argue that $x' \succeq x$. Consider any item $i \in [n]$ and the
difference $\sum_{j\leq i} A_jx' - \sum_{j\leq i} A_jx$. There is at most one group $G_{k}$ whose
items were (partly) larger than $i$ in $x$ and then smaller in $x'$. 
The weight of that group is $w(G_{k}) \leq 4\beta$, thus their ``number'' is
$\sum_{(i,p) \in G_{k}} p_ix_p \leq \frac{4\beta}{\alpha}$. We add at least this ``number'' of items 
to the waste, thus
\[
  \sum_{j\leq i} A_jx' - \sum_{j\leq i} A_jx \geq \frac{4\beta}{\alpha} - \sum_{(i,p) \in G_k} p_ix_p \geq 0
\]
\end{proof}

\subsection{Gluing}

We now formally introduce our novel item gluing method. 
Assume we would a priori know some set of items which in 
an optimal integral solution is assigned to the same bin. 
Then there would be no harm in gluing these items together
to make sure they will end up in the same bin. 
The crucial point is that this is still possible with copies of an item $i$
appearing in the same pattern in a \emph{fractional} solution
as long as the contribution $x_p\cdot p_i$ to the multiplicity vector is integral, 
see again Figure~\ref{fig:RoundingAndGluing}.$(b)$.


\begin{lemma}[Gluing Lemma] \label{lem:GluingLemma}
Suppose that there is a pair of pattern $p$ and item $i$ with $x_p = \frac{r}{q}$  and $p_{i} \geq w\cdot q$ ($r,q,w \in \setN$)
as well as a size $s_{i'} = w \cdot s_i$.
Modify $x$ such that $w\cdot q$ items of type $i$ in pattern $p$ are replaced
by $q$ items of type $i'$ and call the emerging solution $x'$.
Then the following holds:
\begin{enumerate}
\item[a)] The patterns in $x'$ have still size at most one and ${\bm{1}}^Tx={\bm{1}}^Tx'$.
\item[b)] Any integral solution $y'\succeq x'$ can be transformed into an integral solution $y \succeq x$
of the same cost.
\end{enumerate}
\end{lemma}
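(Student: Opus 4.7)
This is a direct calculation. The pattern $p$ satisfies $s^Tp \leq 1$; removing $wq$ items of size $s_i$ and inserting $q$ items of size $s_{i'} = ws_i$ changes the total pattern size by $q \cdot ws_i - wq \cdot s_i = 0$, so the resulting pattern $p'$ still satisfies $s^Tp' \leq 1$ and is valid. The modification only reassigns the weight $r/q$ from pattern $p$ to pattern $p'$, hence $\bm{1}^Tx' = \bm{1}^Tx$.

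\textbf{Part (b): main construction.} Given an integral $y' \succeq x'$, I would build $y$ by reversing the gluing on the supply side. The observation driving the plan is that going from $x$ to $x'$ shifts the multiplicity vector by exactly $A_ix' = A_ix - wr$ and $A_{i'}x' = A_{i'}x + r$ (all other coordinates unchanged). If I can mirror the same shift on $y'$, namely produce $y$ with $A_iy = A_iy' + wr$, $A_{i'}y = A_{i'}y' - r$ and $A_ky = A_ky'$ for $k \notin \{i,i'\}$, then $A_ky - A_kx = A_ky' - A_kx'$ for every $k$, and the prefix-sum condition $y' \succeq x'$ transfers directly to $y \succeq x$.

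I realize this shift pattern-by-pattern: pick $r$ occurrences of item $i'$ distributed across the patterns in the support of $y'$ and, in each host pattern, replace one copy of $i'$ by $w$ copies of item $i$. Since $s_{i'} = ws_i$, each modified pattern keeps the same total size and remains valid; the total bin count is untouched, so $\bm{1}^Ty = \bm{1}^Ty'$.

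\textbf{Main obstacle.} The one delicate point is the feasibility of the replacement: I need $A_{i'}y' \geq r$, i.e.\ enough copies of $i'$ present in the patterns of $y'$ to un-glue. This is \emph{not} immediate from $y' \succeq x'$, because in principle all the demand for $i'$ could be served by strictly larger slots, leaving $A_{i'}y' = 0$. I would resolve this by first invoking the ``moreover'' part of Lemma~\ref{lem:TransportationProblem}, which allows us to replace $y'$, without changing its cost, by an integral solution $\succeq x'$ that admits an assignment with $z_{i'i'} \geq \lfloor A_{i'}x' \rfloor$. Since $A_{i'}x' = A_{i'}x + r$ with $r \in \setN$ and $A_{i'}x \geq 0$, this forces $A_{i'}y' \geq z_{i'i'} \geq \lfloor A_{i'}x' \rfloor \geq r$, exactly what is required to carry out the un-gluing described above.
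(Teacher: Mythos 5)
Your proposal is correct and follows essentially the same route as the paper: part (a) by size preservation, and part (b) by noting $A_{i'}x' \geq r \in \setZ_{>0}$, invoking the ``moreover'' part of Lemma~\ref{lem:TransportationProblem} to guarantee (after a cost-preserving modification of $y'$) at least $r$ actual slots of item $i'$, and un-gluing each such slot into $w$ copies of item $i$. Your explicit coordinate-shift argument ($A_ky-A_kx = A_ky'-A_kx'$ for all $k$) is just a cleaner write-up of why $y \succeq x$ then holds, and your identification of the obstacle (demand for $i'$ possibly served only by larger slots) is exactly the point the paper's use of Lemma~\ref{lem:TransportationProblem} addresses.
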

\begin{proof}
The first claim is clear as $q\cdot s_{i'} = wq\cdot s_i$.


Now, let  $y'\succeq x'$ be an integral solution. Recall that $A_{i'}x' \geq \frac{r}{q}\cdot q = r \in \setZ_{>0}$. 
Select the $r$ smallest slots of size at least $s_{i'}$ that are contained in $Ay'$. Substitute each such slot
with $w$ items of type $i$ and call the emerging solution $y$. Note that $y$ is integral with $y \succeq x$.
\end{proof}

Any sequence of grouping and gluing produces a solution which dominates
the original instance in the sense that any integral solution for the
transformed instance implies an integral solution for the original one.
\begin{corollary} \label{cor:SequenceRoundingAndGluing}
Let $s \in [0,1]^m$ and $x \in \setR_{\geq0}^m$ be any instance for (\ref{eq:BinPackingReformulation}).
Suppose there is a sequence $x= x^{(0)},\ldots,x^{(T)}$ with $x^{(T)} \in \setZ_{\geq0}^m$ such that for each $t \in \{1,\ldots,T\}$, 
at least one of the cases is true: 
\begin{itemize}
\item (i) $x^{(t)} \succeq x^{(t-1)}$ 
\item (ii) $x^{(t)}$ emerges from $x^{(t-1)}$ by gluing items via Lemma~\ref{lem:GluingLemma}. 
\end{itemize}
Then one can construct an integral solution $y \in \setZ_{\geq0}^m$ with $y \succeq x$ and $\bm{1}^Ty \leq \bm{1}^Tx^{(T)}$
in polynomial time.
\end{corollary}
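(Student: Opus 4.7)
The plan is to proceed by \emph{reverse induction} on the sequence $x^{(T)}, x^{(T-1)}, \ldots, x^{(0)}$, maintaining as invariant an integral $y^{(t)} \in \setZ_{\geq 0}^m$ with $y^{(t)} \succeq x^{(t)}$ and $\bm{1}^Ty^{(t)} \leq \bm{1}^Tx^{(T)}$. The base case $t = T$ is trivial: set $y^{(T)} := x^{(T)}$, which is integral by hypothesis and satisfies $y^{(T)} \succeq x^{(T)}$ reflexively, with equal objective value.

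For the inductive step, suppose we have constructed $y^{(t)}$, and consider which of the two cases produced $x^{(t)}$ from $x^{(t-1)}$. In case (i) we are given $x^{(t)} \succeq x^{(t-1)}$, so the relation $\succeq$ being transitive (which is immediate from the definition $\sum_{j\leq i} A_j y \geq \sum_{j\leq i} A_j x$ for all $i$) implies that $y^{(t)} \succeq x^{(t-1)}$, and we may simply take $y^{(t-1)} := y^{(t)}$ without changing the objective. In case (ii) the gluing was performed via Lemma~\ref{lem:GluingLemma}, whose part (b) explicitly promises the conversion we need: given integral $y^{(t)} \succeq x^{(t)}$, produce integral $y^{(t-1)} \succeq x^{(t-1)}$ with $\bm{1}^Ty^{(t-1)} = \bm{1}^Ty^{(t)}$. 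After $T$ such steps, $y := y^{(0)}$ satisfies $y \succeq x$ and $\bm{1}^Ty \leq \bm{1}^Tx^{(T)}$.

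The only thing to verify beyond the two invariants is that each backward step can be executed in polynomial time, not just asserted to exist. Case (i) is free. Case (ii) requires us to look inside the proof of Lemma~\ref{lem:GluingLemma}(b): the transformation reduces to finding an assignment $z$ satisfying the transportation system~\eqref{eq:TransportationProblem} with the additional ``moreover'' property from Lemma~\ref{lem:TransportationProblem}, and then resubstituting each occurrence of the glued item $i'$ in a slot of $y^{(t)}$ by $w$ copies of the original item $i$. Solving the transportation LP is polynomial, and the resubstitution is purely combinatorial, so each step costs polynomial time; since $T$ is polynomial in the input (it will be the number of rounds of our overall rounding algorithm), the total running time is polynomial.

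The main obstacle is really just conceptual rather than technical: one must realize that $\succeq$ behaves well under composition, and that the gluing lemma already encapsulates the only non-trivial transformation required, so no new combinatorial work is needed at the level of the corollary. The two-case structure mirrors exactly the two-case structure of the hypothesized sequence, and the ``moreover'' clause of Lemma~\ref{lem:TransportationProblem}---guaranteeing that integral demands produce an integral assignment with each item type mapped to at least $\lfloor A_i x\rfloor$ of its own slots---is precisely what permits the resubstitution in case (ii) to preserve integrality without inflating the cost.
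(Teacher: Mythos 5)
Your proposal is correct and follows essentially the same route as the paper's (one-line) proof: induction over the sequence, using transitivity of $\succeq$ for steps of type (i) and Lemma~\ref{lem:GluingLemma}.b) for gluing steps, with the cost invariant preserved throughout. The additional remarks on polynomial-time executability of the backward gluing step via the transportation system~\eqref{eq:TransportationProblem} are a harmless elaboration of what the paper leaves implicit.
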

\begin{proof}
Follows by induction over $T$, the definition of ``$\succeq$'' and Lemma~\ref{lem:GluingLemma}.b).
\end{proof}

\subsection{Obtaining a well-spread instance}

As already argued in the introduction, a rounding procedure based on the 
partial coloring method would beat \cite{KarmarkarKarp82} if the patterns
in $p \in \supp(x)$ would satisfy $A_{ip} \leq \delta \cdot A_ix$ for $\delta \leq o(1)$. 
We call this property \emph{$\delta$-well-spread w.r.t. $\varepsilon$-small items}.
A crucial lemma is to show that we can combine grouping and gluing to
obtain a $\frac{1}{\polylog(n)}$-well-spread solution 
while loosing a negligible additive $\frac{1}{\polylog(n)}$ term
in the objective function. 

To simplify notation, let us assume that  the vector $s$ contains
already all sizes $k\cdot s_i$ for $i=1,\ldots,n$ and $k \in \setN$ (even if $x$ does not
contain any item of that size). 
\begin{lemma} \label{lem:WellSpreadInstance}
Let $1\geq s_1 \geq \ldots \geq s_n \geq \frac{1}{n}$ and $x \in [0,1]^m$ be given 
such that for some  $q \in \setZ_{>0}$ one has $x_p \in \frac{\setZ_{\geq0}}{q}$ for all $p \in \supp(x)$ and
$|\supp(x)| \leq n$. Choose any parameters $\delta, \beta > 0$ and call items of size at least $\varepsilon := \delta\frac{\beta}{2q}$ \emph{large} and \emph{small}
otherwise.  
Then one can apply Grouping and Gluing to obtain a solution $\tilde{x}$
with $(\bm{1},2s)^T\tilde{x} \leq (\bm{1},2s)^Tx + O(\beta \log^2 n)$ and the property that 
$p_i \leq \delta\cdot A_i\tilde{x}$ for all small items $i$ and all $p \in \supp(\tilde{x})$.
\end{lemma}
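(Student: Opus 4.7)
The plan is to interleave the Grouping Lemma and the Gluing Lemma in an iterative procedure that, starting from the given $x$, eliminates all violations of the well-spread condition on small items. In each round we first invoke Lemma~\ref{lem:Grouping} with parameter $\beta$ on the set of currently small items (across all $O(\log n)$ small size classes $\{i:s_i\in[2^{-\ell-1},2^{-\ell})\}$); this ensures every surviving small item $i$ either has $A_i x = 0$ or $s_i A_i x \geq \beta$. The cost of one such call is $O(\beta \log n)$ by Lemma~\ref{lem:Grouping}. After this, we scan for any pattern $p\in\supp(x)$ and small item $i$ with $p_i > \delta A_i x$, and apply Lemma~\ref{lem:GluingLemma} to that pair with $w:=\lfloor p_i/q\rfloor$.

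The essential calculation, which drives the entire plan, is that each triggered gluing converts small mass into \emph{large} mass. Indeed, after the grouping step $A_i x \geq \beta/s_i > \beta/\varepsilon = 2q/\delta$ for every small $i$ still present, so a violating pattern has $p_i > \delta A_i x > 2q$, which already guarantees $w\geq 2$ and the applicability of Lemma~\ref{lem:GluingLemma} (note that $x_p\in\frac{\setZ_{\geq 0}}{q}$ is preserved by both operations). The new item has size
\[
w s_i > \Bigl(\tfrac{p_i}{q}-1\Bigr) s_i > \tfrac{\delta A_i x}{q} s_i - s_i \geq \tfrac{\delta\beta}{q} - s_i > \tfrac{\delta\beta}{2q} = \varepsilon,
\]
so it is large. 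Hence gluings do not create new small item types; they only drain small mass from the support. After one complete pass of gluings in a round, every pattern satisfies $p_i < q$ for every small $i$, because the rule $w=\lfloor p_i/q\rfloor$ reduces the residue below $q$.

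At this point the well-spread condition $p_i \leq \delta A_i x$ is equivalent (up to the trivial slack $p_i<q$) to $A_i x \geq q/\delta$, i.e.~$s_i A_i x \geq s_i q/\delta$, which for small items is at most $\varepsilon q/\delta = \beta/2$. So if the gluings have not shrunk $A_i x$ below this threshold for every surviving small $i$, we are already well-spread and stop. Otherwise the offending small items have $s_i A_i x < \beta$, and a second Grouping round either eliminates them or reconsolidates them into larger size classes, after which we repeat. By Corollary~\ref{cor:SequenceRoundingAndGluing} any sequence of such steps yields a valid dominating solution $\tilde x$; gluings have no objective cost, and each grouping contributes $O(\beta\log n)$.

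The main obstacle is bounding the number of outer iterations by $O(\log n)$, so that the total objective-function increase is $O(\beta \log^2 n)$. The intended potential is the number of distinct small-size classes still present in $\supp(x)$: each such class is non-trivially affected in every round (since any item remaining after Grouping has enough mass to force at least one gluing whenever a violating pattern still exists, and gluings strictly remove mass from the class, after which the next Grouping either eliminates it or collapses it into a neighboring class). Combined with the initial bound of $O(\log n)$ small size classes, this delivers the $O(\log n)$ iteration count. Subtleties to handle in the formal proof are that Grouping can increase some $p_{i^*}$ by merging several (item, pattern) pairs onto the smallest surviving size, which may create brand-new violations in the next round; these must be charged to the disappearance of the merged classes, so that a class can only be the source of new violations once before being absorbed.
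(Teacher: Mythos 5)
Your overall plan coincides with the paper's: alternate a grouping pass with parameter $\beta$ over the small items (cost $O(\beta\log n)$ per pass) with maximal gluing, observe that every glued item has size at least roughly $\delta\beta/(2q)=\varepsilon$ and hence is large, use the threshold $s_iA_ix\geq\beta/2$ (your $A_ix\geq q/\delta$) to certify well-spreadness of the surviving small items, and repeat on the items that fell below the threshold. Up to constant factors your per-round calculations are fine (one caveat: ``every pattern satisfies $p_i<q$ for every small $i$ after a pass'' holds only for the pairs you actually glued; untouched pairs only satisfy $p_i\leq\delta A_ix$ measured at scan time, which degrades as $A_ix$ shrinks during the pass -- harmless for constants, but it should be said).

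The genuine gap is exactly the point you flag as the main obstacle: the bound of $O(\log n)$ on the number of outer rounds. Your proposed potential, the number of small size classes present in $\supp(x)$, does not decrease in the way you claim. Grouping (Lemma~\ref{lem:Grouping}) is applied per size class and only rounds sizes down to the smallest item \emph{within the same class} (or moves mass to the waste), so it never ``collapses a class into a neighboring class''; gluing converts small mass into \emph{large} items, not into an adjacent small class; and a class containing even one well-covered item persists forever, since well-covered items are never touched again. So a class is ``non-trivially affected'' each round without disappearing, and ``gluings strictly remove mass'' gives no quantitative rate by itself; the charging scheme in your last sentence has nothing to charge to. The paper's argument uses a different potential: the total size $\sum_{i\text{ not well covered}}s_iA_ix$ of the not-well-covered small items. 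An item that ends a round not well covered had $s_iA_ix\geq\beta$ after grouping and $<\beta/2$ after gluing, so it lost at least half its mass, and grouping does not increase this total; hence the potential at least halves each round. Since initially $\sum_i s_iA_ix<n$, and since one may discard all patterns with $x_p\leq\frac1n$ (total extra cost at most $1$ over all rounds) so that any surviving item has $s_iA_ix\geq\frac{1}{n^2}$ (using $s_i\geq\frac1n$), the process stops after $O(\log n)$ rounds, giving the claimed $O(\beta\log^2 n)$. Your write-up is missing both the halving statement for the correct quantity and the $1/n^2$ stopping threshold that turns it into an $O(\log n)$ iteration count.
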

\begin{proof}
First apply grouping with parameter $\beta$ to the small items to obtain 
a vector $x' \succeq x$ with $(\bm{1},2s)^Tx' \leq (\bm{1},2s)^Tx + O(\beta \cdot \log n)$
such that $s_i\cdot A_ix' \geq \beta$ whenever $A_ix'>0$.
Now apply gluing for each $i$ and $p \in \supp(x')$, 
wherever $s_iA_{ip} \geq 2\delta\cdot\beta$ with maximal possible $w$. 
In fact, that means $w \geq \lfloor\frac{2\delta \beta}{qs_i}\rfloor \geq \frac{\delta\beta}{qs_i}$ since
$\frac{\delta\beta}{qs_i} \geq \frac{\delta \beta}{q\varepsilon} \geq 1$.
The size of the items emerging from the
gluing process is at least $w\cdot s_i \geq \frac{\delta\beta}{q}$, thus they are large by definition.
We have at most $q$ items of type $i$ remaining in the pattern and 
their total size is $q \cdot s_i \leq q\cdot \varepsilon \leq \frac{\delta\beta}{2}$.
Let $x''$ be the new solution.

If after gluing, we still have $s_iA_ix'' \geq \frac{\beta}{2}$, then we say $i$ is \emph{well-covered}.
If indeed all small items are well-covered, then we are done 
because $s_iA_{ip} \leq s_iq \leq \frac{\delta\beta}{2} \leq \delta \cdot s_iA_ix''$ for all small $i$ and $p \in \supp(x'')$.

Thus, let $S := \{ i \textrm{ small} \mid i\textrm{ not well-covered}\}$ be the set of those items whose number has
decreased to less than half due to gluing.
We apply again grouping (Lemma~\ref{lem:Grouping}) to $S$ (note that we do not 
touch the well-covered items).
Then we apply again gluing where ever possible and repeat the procedure until 
all small items are well-covered. Note that once an item is well-covered it stays well-covered
as it is neither affected by grouping nor by gluing.

In each iteration the waste increases by $O(\beta \cdot \log n)$, thus it suffices to argue that
the procedure stops after at most $O(\log n)$ iterations. 
Note that the total size of not well-covered items  $\sum_{i\textrm{ not well-covered}}s_iA_ix$ decreases
by at least a factor of $\frac{1}{2}$
in each iteration. Moreover at the beginning we had $\sum_{i=1}^n s_iA_ix < n$ 
and we can stop the procedure when  $\sum_{i\textrm{ not well-covered}}s_iA_ix \leq \frac{1}{n^2}$\footnote{In fact, whenever we have a pattern $p$ with $x_p \leq \frac{1}{n}$ we can just move it to the waste. In total over all iterations this does not cost us more than an extra $1$ term. 
Then we always have the trivial lower bound $s_iA_ix \geq \frac{1}{n^2}$ as $s_i \geq \frac{1}{n}$.}, which
shows the claim. 
\end{proof}

\section{The algorithm}

In this section, we present the actual rounding algorithm, which can be informally
stated as follows (we give a more formal definition later):
\begin{enumerate}
\item[(1)] FOR $\log n$ iterations DO
  \begin{enumerate}
  \item[(2)] round $x$ s.t. $x_p \in \frac{\setZ_{\geq0}}{\polylog(n)}$ for all $p$
  \item[(3)] make $x$ $\frac{1}{\polylog(n)}$-well spread
  \item[(4)] run the constructive partial coloring lemma to make half of the variables integral
  \end{enumerate}
\end{enumerate}
 
For the sake of comparison note that the Karmarkar-Karp algorithm~\cite{KarmarkarKarp82}
consists of step (1) + (4), just that the application
of the constructive partial coloring lemma is replaced with 
grouping + computing a basic solution.

\subsection{Finding a partial coloring}

The next step is to show how Lemma~\ref{lem:ConstructivePartialColoring} 
can be applied to make at least half of the variables integral.
As this is the crucial core procedure in our algorithm, we present it as a stand-alone
theorem and list all the properties that we need for matrix $A$.
Later, we will apply Theorem~\ref{thm:ConstructivePartialColoringForBinPacking} to 
the matrix of patterns $p$ that have $0<x_p<1$, after making the solution well-spread.
Mathematically speaking, the point is that any matrix $A$ that is column-sparse and has 
well-spread rows admits good colorings via the entropy method.

\begin{theorem} \label{thm:ConstructivePartialColoringForBinPacking}
Let $x \in [0,1]^m$ be a vector and $\delta,\varepsilon$ be parameters with $0<\varepsilon\leq\delta^2 \leq 1$ and let  $A \in \setZ_{\geq0}^{n × m}$ (with  $m\geq100 \log (\max_i \{ \frac{2}{s_i}\})$) be any matrix with numbers $1  \geq s_1 \geq \ldots \geq s_n > 0$. Suppose that for any column $p \in [m]$, one has
 $A^p s = \sum_{i=1}^n A_{ip}s_i \leq 1$ and for any row $i$ with $s_i \leq \varepsilon$ one has $\|A_i\|_{\infty} \leq \delta \cdot \|A_i\|_1$. 
Then there is a randomized algorithm with expected polynomial running time
to compute a $y \in [0,1]^m$ with $|\{ p \in [m] \mid y_p \in \{ 0,1\}\}| \geq \frac{m}{2}$, $\bm{1}^Ty = \bm{1}^Tx$ and for all $i \in [n]$
\[
  \Big|\sum_{j\leq i} A_{j}(y-x)\Big| 
\leq \begin{cases}
  O(\frac{1}{s_i}) & s_i > \varepsilon \\
  O(\sqrt{\ln(\frac{2}{\delta})\cdot\delta} \cdot \frac{1}{s_i}) & s_i \leq \varepsilon 
  \end{cases}
\]
\end{theorem}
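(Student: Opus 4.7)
The strategy is to invoke Lemma~\ref{lem:ConstructivePartialColoring} on $x$ with a two-level family of test vectors implementing the ``groups and subgroups'' decomposition sketched in Section~2. First I partition the items into size classes $S_k := \{i \in [n] : \tfrac{1}{k}\leq s_i < \tfrac{2}{k}\}$ for each $k$ a power of $2$, and within every $S_k$ (in the given sorted order) I collect consecutive items into groups $I^k_1,\ldots,I^k_{t_k}$ with $\|v_{I^k_j}\|_1 \in [100k,200k]$, where $v_I := \sum_{i\in I}A_i$ (any item with $\|A_i\|_1 > 200k$ is placed into its own singleton group, which is harmless because then $v_I(y-x)=0$ controls the prefix sum at $i$ exactly). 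A column-by-column accounting based on $\sum_i A^p_i s_i \leq 1$ gives $\sum_k \|v_{S_k}\|_1/(100k) \leq m/100$, and charging one ``stub'' group per size class costs at most $\log(2/\min_i s_i) \leq m/100$ by the hypothesis on $m$, so the total number of groups satisfies $\sum_k t_k \leq m/50$. In every group lying in a \emph{small} size class (i.e.\ $k \geq 1/\varepsilon$), I additionally build a nested chain of $L := \lceil 2/\delta\rceil$ subgroups $G^{k,j}_1\subsetneq\cdots\subsetneq G^{k,j}_L = I^k_j$, each one a prefix of $I^k_j$ in sorted order whose $v$-mass grows by roughly $\|v_{I^k_j}\|_1/L = O(\delta k)$.

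I would then call Lemma~\ref{lem:ConstructivePartialColoring} on $x$ with three families of test vectors: (i)~the all-ones vector with $\lambda = 0$, forcing $\bm{1}^T y = \bm{1}^T x$; (ii)~each group vector $v_{I^k_j}$ with $\lambda = 0$, forcing $v_{I^k_j}(y-x) = 0$; and (iii)~each small-size-class subgroup vector $v_{G^{k,j}_l}$ with parameter $\lambda := 4\sqrt{\ln(2/\delta)}$. The entropy check in~\eqref{eq:EntropyInConstructiveLemma} is then immediate: the $\lambda=0$ constraints contribute $1+\sum_k t_k \leq 1+m/50$, while each subgroup contributes $e^{-\ln(2/\delta)} = \delta/2$, so their total contribution is at most $(m/50)\cdot L\cdot(\delta/2) \leq m/50$, leaving the sum safely under $m/16$.

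Given such a $y$, I fix an index $i$ in size class $k$ and split
\begin{equation*}
\sum_{j \leq i} A_j(y-x) \;=\; \sum_{k' < k} v_{S_{k'}}(y-x) \;+\; v_{S_k \cap [1,i]}(y-x).
\end{equation*}
Every complete group satisfies $v_{I^{k'}_j}(y-x) = 0$, so only the ``tail'' group $I^\star$ containing $i$ survives. For a large item ($s_i > \varepsilon$) there is no subgroup structure and I simply bound $|v_{I^\star \cap [1,i]}(y-x)| \leq \|v_{I^\star}\|_1 \leq 200k = O(1/s_i)$ using $x,y\in[0,1]^m$. For a small item ($s_i\leq\varepsilon$) I split the tail at the largest subgroup $G^\star \subseteq I^\star \cap [1,i]$: the residue has at most $\|v_{I^\star}\|_1/L = O(\delta k)$ incidences, giving an error of $O(\delta/s_i) = O(\sqrt{\delta\ln(2/\delta)}/s_i)$, while the $G^\star$-part is handled by the coloring guarantee $|v_{G^\star}(y-x)| \leq 4\sqrt{\ln(2/\delta)}\cdot\|v_{G^\star}\|_2$.

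The main obstacle, and the step where the well-spread hypothesis really earns its keep, is upgrading the row-wise bound to a bound on $v_{G^\star}$. Summing $\|A_i\|_\infty \leq \delta\|A_i\|_1$ over $i \in G^\star$ yields $\|v_{G^\star}\|_\infty \leq \sum_{i\in G^\star}\|A_i\|_\infty \leq \delta\|v_{G^\star}\|_1$, and combined with the elementary inequality $\|\cdot\|_2 \leq \sqrt{\|\cdot\|_1\,\|\cdot\|_\infty}$ this gives $\|v_{G^\star}\|_2 \leq \sqrt{\delta}\,\|v_{G^\star}\|_1 \leq 200k\sqrt{\delta}$, and hence $|v_{G^\star}(y-x)| = O(\sqrt{\delta\ln(2/\delta)}/s_i)$. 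Without this tensorization---i.e.\ with only the trivial $\|v_{G^\star}\|_2 \leq \|v_{G^\star}\|_1$---the bound would collapse back to the $O(1/s_i)$ that we already get from a basic-solution argument and the whole improvement would be lost. The choice of parameters ($100k$ group size, $L = \Theta(1/\delta)$ subgroups, $\lambda = \Theta(\sqrt{\ln(2/\delta)})$) is exactly the balance that makes the residue and the subgroup-coloring term of the same order while respecting the entropy budget of $m/16$; the relation $\varepsilon\leq\delta^2$ is then enough to absorb the residue term $O(\delta/s_i)$ into the advertised $O(\sqrt{\delta\ln(2/\delta)}/s_i)$ bound.
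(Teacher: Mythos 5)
Your overall architecture is the same as the paper's: groups constrained exactly via $\lambda=0$, nested prefix subgroups with $\lambda=\Theta(\sqrt{\ln(2/\delta)})$, the all-ones vector for the objective, the entropy count, and the tensorization $\|v_G\|_2\le\sqrt{\|v_G\|_1\,\|v_G\|_\infty}\le\sqrt{\delta}\,\|v_G\|_1$. But there is a genuine gap in the subgroup construction. You assume you can build prefixes of a group whose $\ell_1$-mass grows in increments of roughly $\|v_{I}\|_1/L=O(\delta k)$, and your residue bound rests entirely on this. The well-spread hypothesis only controls the ratio $\|A_i\|_\infty/\|A_i\|_1$; it puts no upper bound on $\|A_i\|_1$ itself, so a single small item inside a group may carry $\ell_1$-mass anywhere up to your group cap of $200k$ (your singleton-group escape hatch only catches rows above $200k$). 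Since subgroups must consist of whole rows, one such row forces some increment to have mass far exceeding $\delta k$; if the prefix index $i$ falls inside that increment, the residue $R$ contains that row, $\|v_R\|_1$ can be $\Theta(k)$, and for small items your bound collapses back to $O(1/s_i)$ rather than the claimed $O(\sqrt{\delta\ln(2/\delta)}\cdot\frac{1}{s_i})$. (The same issue already makes the stated group-mass window $[100k,200k]$ unattainable in general, though that part is cosmetic since only the lower bound is needed for the counting.)

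This is exactly why the paper's proof begins with a row-splitting preprocessing step: rows of small items are split into pieces with $\frac{1}{2\delta}\le\|A_i\|_1\le\frac{1}{\delta}$ and $\|A_i\|_\infty=1$ (rows of large items into pieces with $\|A_i\|_1=1$). After splitting, every row of a small item has mass at most $\frac{1}{\delta}\le\sqrt{\delta}k$ --- this is precisely where the hypothesis $\varepsilon\le\delta^2$ is used, via $k\ge 1/\varepsilon\ge 1/\delta^2$ --- so nested subgroups with increments of order $\sqrt{\delta}k$ (the paper's scale; a $\delta k$ scale as in your write-up would also fit the entropy budget) actually exist, the residue is $O(\sqrt{\delta}k)$, and the prefix guarantees survive splitting because prefixes ending at complete original rows are preserved. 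It is telling that in your argument the relation $\varepsilon\le\delta^2$ is invoked only to absorb the $O(\delta/s_i)$ residue term into $O(\sqrt{\delta\ln(2/\delta)}/s_i)$, which needs no such relation at all; the hypothesis is really needed for the splitting step you omitted. Adding the row splitting (or an equivalent device, e.g.\ forcing a subgroup boundary right after every row of mass exceeding $\sqrt{\delta}k$, and re-checking the entropy budget) repairs the proof, and the rest of your argument then matches the paper's.
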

\begin{proof}
First of all, it will be convenient for our arguments if each individual row
has a small norm. Consider a row $A_i$ belonging to a large index  (i.e. $s_i > \varepsilon$)
and replace it with $\|A_i\|_1$ many rows that sum up to $A_i$, each having unit $\|\cdot\|_1$-norm.
Note that any $y$ that satisfies the claim for the modified matrix does so for the 
original matrix $A$. 
Similarly, consider any index $i$ belonging to a small item (i.e. $s_i \leq \varepsilon$). 
Then we know that $\|A_i\|_{\infty} \leq \delta \cdot \|A_i\|_1$. Thus we can replace $A_i$
by non-negative integral row vectors that sum up to $A_i$, each of which has $\| \cdot \|_{\infty}$-norm 1 and $\| \cdot \|_{1}$-norm 
between $\frac{1}{2\delta}$ and $\frac{1}{\delta}$.  After this replacement we can assume that each index $i$ with $s_i \leq \varepsilon$
satisfies  $\frac{1}{2\delta} \leq \|A_i\|_1 \leq \frac{1}{\delta}$ and $\|A_i\|_{\infty} = 1$. 

In the following let $C>0$ be a large enough constant that we determine later.
We will prove the claim via a single application of Lemma~\ref{lem:ConstructivePartialColoring}.
In particular we need to make a choice of vectors $v_i$ and parameters $\lambda_i$.
First, we partition the items into \emph{groups} $[n] = I_1 \dot{\cup} \ldots \dot{\cup} I_t$ such that 
each group $I$ consists of consecutive items and is chosen maximally such that $\sum_{i\in I} \|A_i\|_1 s_i \leq C$
and so that $I$ contains only items from one size class. In other words, apart from the $\log (\max_i\{ \frac{2}{s_i}\})$
many groups that contain the last items from some size class, we will have that $\sum_{i \in I} \|A_i\|_1s_i \geq C-1$.
For each group $I$, whether complete or not, we define a vector $v_I := \sum_{i \in I} A_i$ with parameter $\lambda_I := 0$.

Now consider a group $I$ that belongs to small items, say the items $i \in I$ have size $\frac{1}{k} \leq s_i \leq \frac{2}{k}$ for some $k$.
We form \emph{subgroups} $G_1 \subseteq G_2 \subseteq \ldots \subseteq G_{t(I)} = I$ 
such that $\|v_{G_{j+1} \backslash G_j}\|_1 \in [C\sqrt{\delta}k,2C\sqrt{\delta}k]$ (this works out since $\|A_i\|_1 \leq \frac{1}{\delta} = \frac{\sqrt{\delta}}{\delta^{3/2}} \leq \frac{\sqrt{\delta}}{\varepsilon} \leq \sqrt{\delta}k$; again the last subgroup $G_{t(I)}$ might be smaller). 
For each subgroup $G$, we add the vector $v_G$ to our list, equipped with error parameter $\lambda_G := 4\sqrt{\ln( \frac{2}{\delta})}$.

To control the objective function, we also add the all-ones vector $v_{\textrm{obj}} := (1,\ldots,1)$ with $\lambda_{\textrm{obj}} := 0$.
Now we want to argue that Lemma~\ref{lem:ConstructivePartialColoring}, applied to all the vectors $v_I,v_G,v_{\textrm{obj}}$ defined above, provides a solution $y$ that satisfies the claim. 
The first step is to verify that indeed the ``entropy condition'' \eqref{eq:EntropyInConstructiveLemma} is satisfied.
As the item sizes for each column sum up to most one, we know that $\sum_{i \in [n]} \|A_i\|_1s_i \leq m$, 
where $m$ is the number of columns. Each complete group
has $\sum_{i \in I} \|A_i\|_1s_i \geq C-1$, thus the number of groups is $t \leq \frac{m}{C-1} + \log( \max_i\{ \frac{2}{s_i}\}) \leq \frac{m}{50}$ for $C$ large enough
and each group $I$ contributes $e^{-\lambda_I^2/16} = 1$ to \eqref{eq:EntropyInConstructiveLemma}.

Next, consider any group $I$ and let us calculate the contribution just of its subgroups $G_1,\ldots,G_{t(I)}$
to \eqref{eq:EntropyInConstructiveLemma}. The number of $I$'s subgroups is  $t(I) \leq \frac{1}{\sqrt{\delta}} + 1 \leq \frac{2}{\sqrt{\delta}}$
 and each subgroup $G$ contributes $e^{-\lambda_G^2/16} = \frac{\delta}{2}$, thus their total contribution 
is bounded by $\frac{2}{\sqrt{\delta}} \cdot \frac{\delta}{2} \leq 1$. In other words, the total contribution of all subgroups
is bounded from above by $\frac{m}{100}$ as well and \eqref{eq:EntropyInConstructiveLemma} indeed holds
and we can apply Lemma~\ref{lem:ConstructivePartialColoring}. The algorithm returns a vector $y$ such that
$|v_I(x-y)| = 0$ for each group $I$ and $|v_G(x-y)| \leq 4\sqrt{\ln( \frac{2}{\delta})}\cdot\|v_G\|_2$ for each subgroup $G$
(and of course $\bm{1}^Tx = \bm{1}^Ty$).

Finally, consider any item $i$ and suppose it is small with $\frac{1}{k} \leq s_i \leq \frac{2}{k}$. 
It remains to show that
$|\sum_{j\leq i} A_j(x-y)| \leq O(\sqrt{\delta\ln(\frac{2}{\delta})} \cdot k)$.
Now we use that the interval $\{1,\ldots,i\}$
can be written as disjoint union of a couple of groups + a single subgroup + a small rest.
So let $t'$ be the index with $i \in I_{t'}$.
Moreover, let $G$ be the (unique) maximal subgroup such that $G \subseteq \{ 1,\ldots,i\} \backslash \bigcup_{t''< t'} I_{t''}$ and let $R := \{ 1,\ldots,i\} \backslash (G \cup \bigcup_{t''< t'} I_{t''})$
be the remaining row indices. 
The error that our rounding produces w.r.t. $i$ is 
\[
 \Big|\sum_{j \leq i} A_{j}(x-y)\Big| = \Big|\sum_{t''< t'} \underbrace{v_{I_{t''}}(x-y)}_{=0} + v_G(x-y) + v_R(x-y)\Big| 
\leq  4\sqrt{\ln\left( \frac{2}{\delta}\right)} \cdot \|v_G\|_2 + \underbrace{\|v_R\|_1}_{\leq 2C\sqrt{\delta}k}
\]
It remains to bound $\|v_G\|_2$. At this point, we crucially rely on the 
assumption $\|A_{i}\|_{\infty} \leq 2\delta \cdot \|A_{i}\|_1$. Using this together with Hölder's inequality and the triangle inequality we obtain
\[
\|v_G\|_2 \leq \sqrt{\|v_G\|_1 \cdot \|v_G\|_{\infty}} \leq \sqrt{\|v_{G}\|_1 \cdot \sum_{i \in G} \|A_i\|_{\infty} } 
\leq \sqrt{\|v_G\|_1 \cdot 2\delta \sum_{i \in G} \|A_i\|_1 } = \sqrt{2\delta} \underbrace{\|v_G\|_1}_{\leq O(k)} = O(\sqrt{\delta} k)
\]
Hence the claim is proven for small items. 
On the other hand for large items $i$ we do not even need to use the subgroups. 
Let $t'$ be the group with $i \in I_{t'}$ and denote $R := \{ 1,\ldots,i\} \backslash \bigcup_{t''<t'} I_{t''}$ as the remaining interval. 
Then we directly obtain $|\sum_{j \leq i} A_j(x-y)| = |\sum_{t''<t'} v_{I_{t''}}(x-y) + v_R(x-y)| \leq \|v_R\|_1 \leq O(\frac{1}{s_i})$. 
\end{proof}

One can alternatively prove Theorem~\ref{thm:ConstructivePartialColoringForBinPacking}
 by combining the classical partial coloring lemma 
and the Lovász-Spencer-Vesztergombi Theorem~\cite{DiscrepancyofSetSystemsAndMatrices-LSV86}.
Note that the bound of the classical partial coloring lemma  
involves an extra $O(\log \frac{1}{\lambda})$ term for $\lambda\leq2$. However, using the parametrization as
in \cite{DiscrepancyOfPermutations-SpencerEtAl} one could avoid loosing a super constant factor.
Moreover, instead of just 2 different group types (``groups'' and ``subgroups''), 
we could use an unbounded number to save the $\sqrt{\ln(\frac{2}{\delta})}$ factor. 
However, this would not improve on the overall approximation ratio.

Recall that for the gluing procedure in Lemma~\ref{lem:WellSpreadInstance} we need the 
property that the entries
in $x_p$ are not too tiny -- say at least $\frac{1}{\polylog(n)}$.
But this is easy to achieve (in fact, the next lemma also follows from 
\cite{KarmarkarKarp82} or \cite{EntropyMethodRothvoss-SODA12}).
\begin{lemma} \label{lem:RoundingToMultipleOfGamma}
Given any instance $1\geq s_1\geq\ldots\geq s_n\geq\frac{1}{n}$, $x \in \setR_{\geq0}^m$ and parameter $\gamma>0$.  Then one can compute a $y \succeq x$ in expected polynomial time
such that all $y_p$ are multiples of $\gamma$ and $(\bm{1},2s)^Ty \leq (\bm{1},2s)^Tx + O(\gamma \cdot \log^2 n)$.
\end{lemma}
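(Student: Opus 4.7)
The plan is to reduce to integer rounding by rescaling and then apply Karmarkar--Karp--style iterative rounding. Setting $\tilde{x} := x/\gamma$, it suffices to produce an integral $\tilde{y} \succeq \tilde{x}$ with $(\bm{1},2s)^T\tilde{y} \leq (\bm{1},2s)^T\tilde{x} + O(\log^2 n)$; the rescaled vector $y := \gamma\tilde{y}$ then has all entries in $\gamma\setZ_{\geq 0}$, satisfies $y \succeq x$, and has $(\bm{1},2s)$-cost exceeding that of $x$ by at most $O(\gamma \log^2 n)$.

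To round $\tilde{x}$, I maintain a running integer accumulator $\tilde{y}^{(t)}$ and a fractional working vector $\tilde{x}^{(t)}$ with the invariant $\tilde{y}^{(t)} + \tilde{x}^{(t)} \succeq \tilde{x}$, and iterate $T = O(\log n)$ times starting from $\tilde{y}^{(0)} := 0$, $\tilde{x}^{(0)} := \tilde{x}$. Each round has three steps: (a) move the componentwise integer parts $\lfloor\tilde{x}^{(t)}\rfloor$ into the accumulator, leaving a fractional residual $r^{(t)}$ with entries in $[0,1)$, so that $\bm{1}^Tr^{(t)} \leq |\supp(r^{(t)})|$; (b) invoke the Grouping Lemma~\ref{lem:Grouping} with $S = [n]$ and a sufficiently large constant $\beta$ to produce $\hat{r}^{(t)} \succeq r^{(t)}$ at additive cost $O(\log n)$; using the bound $\sum_i s_iA_i\hat{r}^{(t)} \leq \bm{1}^T\hat{r}^{(t)}$ together with the Grouping guarantee $s_iA_i\hat{r}^{(t)} \geq \beta$ on active rows, the number of active item types (those $i$ with $A_i\hat{r}^{(t)} > 0$) falls to at most $|\supp(\hat{r}^{(t)})|/16$; (c) apply Lemma~\ref{lem:ConstructivePartialColoring} with $\lambda_i = 0$ to the vectors $\{A_i : i\text{ active}\}$ together with the objective $\bm{1}$, producing $\tilde{x}^{(t+1)}$ which makes at least half the entries $0/1$, preserves $A_i\tilde{x}^{(t+1)} = A_i\hat{r}^{(t)}$ on every active row (which, together with $A_i\tilde{x}^{(t+1)} \geq 0$ on inactive rows, preserves dominance), and preserves $\bm{1}^T\tilde{x}^{(t+1)} = \bm{1}^T\hat{r}^{(t)}$.

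After $T = O(\log n)$ rounds the fractional support decays to at most $O(\log n)$ entries (geometric halving modulo the $O(\log n)$ waste patterns introduced per Grouping call), which are finished off by a direct rounding at additional cost $O(\log n)$; remaining waste-pattern entries are cleaned via Lemma~\ref{lem:RoundingFractionalWastePatterns} after rescaling. The total additive overhead is $T \cdot O(\log n) + O(\log n) = O(\log^2 n)$, which becomes $O(\gamma \log^2 n)$ after rescaling by $\gamma$. The main obstacle is verifying that the entropy condition~\eqref{eq:EntropyInConstructiveLemma} stays satisfied in every round as the support shrinks and waste accumulates; this is what forces $\beta$ to be a sufficiently large absolute constant and requires a trivial base-case replacement of step (c) once $|\supp(\tilde{x}^{(t)})|$ drops below the constant threshold on which the partial-coloring argument breaks.
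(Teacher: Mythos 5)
Your plan is correct in substance, but it takes a genuinely different route from the paper. The paper writes $x=\gamma x'+\gamma z$ with $x'\in[0,1]^m$ and then simply applies its Theorem~\ref{thm:ConstructivePartialColoringForBinPacking} with $\delta=1$ for $\log n$ rounds: each round only preserves group sums approximately, the prefix-sum error accumulates to $O(\log n\cdot\frac{1}{s_i})$, and dominance is repaired \emph{once at the end} by adding $O(\log n\cdot 2^{\ell})$ items per size class to the waste, for a total of $O(\log^2 n)$. You instead pay the grouping cost \emph{up front in every round} via Lemma~\ref{lem:Grouping} (cost $O(\beta\log n)$ per round), which lets you impose exact constraints $\lambda_i=0$ on each active item row, so no end-of-process repair is needed; with all $\lambda$'s zero, your step (c) is really just the basic-solution argument that the paper itself notes is the degenerate case of Lemma~\ref{lem:ConstructivePartialColoring}, i.e.\ your proof is essentially Karmarkar--Karp-style iterative rounding phrased in the paper's $\succeq$-framework. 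Both accountings give $T\cdot O(\log n)=O(\log^2 n)$, and your rescaling argument ($\succeq$ and the cost are homogeneous, so the $\gamma$-scaling transfers the bound) matches the paper's.

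Two details need tightening. First, like the paper, you must begin by replacing $x$ with a basic solution so that $|\supp(x)|\leq n$; otherwise the geometric halving only gives $T=O(\log m)$ rounds and your bound becomes $O(\gamma\log m\log n)$. Second, your counting ``number of active rows $\leq|\supp(\hat r^{(t)})|/16$'' is not literally correct once waste accumulates: the Grouping guarantee $s_iA_i\hat r^{(t)}\geq\beta$ refers to $A_i$ applied to the full vector including waste columns (whose entries also need not lie in $[0,1)$), so the clean chain $\#\{\text{active}\}\leq\frac{1}{\beta}\sum_i s_iA_i\hat r\leq\frac{1}{\beta}\bm{1}^T\hat r\leq\frac{1}{\beta}|\supp(\hat r)|$ leaks an additive $O(\log n)/\beta$ term coming from the total size of the (fractional residual plus freshly added) waste. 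Either apply Lemma~\ref{lem:Grouping} to the regular part of the residual only, or accept that the entropy condition \eqref{eq:EntropyInConstructiveLemma} breaks already when the regular support drops to $\Theta(\log n)$ --- not at a constant threshold as you state. The latter is harmless because your own stopping rule (quit once $O(\log n)$ fractional entries remain and round them directly at cost $O(\log n)$) absorbs it, exactly as the paper's ``stop below $100\log n$'' clause does; also note that the waste entries need not be made multiples of $\gamma$ at all (they are exempt throughout the paper), so the final appeal to Lemma~\ref{lem:RoundingFractionalWastePatterns} inside this lemma is unnecessary.
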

\begin{proof}
After replacing $x$ with a
basic solution, we may assume that $|\supp(x)| \leq n$. We write $x = \gamma x' + \gamma z$ with $z \in \setZ_{\geq0}^m$ and $\bm{0} \leq x' \leq \bm{1}$. 
Now apply  $\log n$ times Theorem~\ref{thm:ConstructivePartialColoringForBinPacking}
with $\delta = 1$ to $x'$ to obtain $y' \in \{ 0,1\}^m$ with $\bm{1}^Ty' \leq \bm{1}^Tx'$ and
$|\sum_{j\leq i} A_j(x-y)| \leq O(\log n \cdot \frac{1}{s_i})$ (if at the end of the rounding process, the
fractional support goes below $100\log(n)$, we can stop and remove the remaining fractional patterns). Let $i_{\ell}$ be the largest item
in size class $\ell$. Then for $\ell = 0,\ldots,\log n$, we add
 $O(\log n \cdot 2^{\ell})$ copies of item $i_{\ell}$ to the waste of $y'$. Now
$(\bm{1},2s)^Ty' \leq (\bm{1},2s)^Tx' + O(\log^2 n)$ and $y' \succeq x'$. Eventually define $y := \gamma y' + \gamma z$
and observe that $y \succeq x$, all entries $y_p$ are multiples 
of $\gamma$ and $(\bm{1},2s)^Ty \leq (\bm{1},2s)^Tx + O(\gamma \log^2 n)$.
\end{proof}
Observe that just applying Lemma~\ref{lem:RoundingToMultipleOfGamma} with $\gamma = 1$ yields an integral 
solution with cost $OPT_f + O(\log^2 n)$.

\subsection{Proof of the main theorem}

It remains to put all ingredients together and show that each of 
the $\log n$ applications of the partial coloring lemma increases
the objective function by at most $O(\log \log n)$.
\begin{theorem} \label{thm:LogNLogLogNgap}
Let $1 \geq s_1 \geq \ldots \geq s_n \geq \frac{1}{n}$ with  and $x \in \setR_{\geq0}^m$ be given. Then there is an expected polynomial time algorithm to compute
$y\succeq x$ with $\bm{1}^Ty \leq \bm{1}^Tx + O(\log n\cdot\log \log n)$.
\end{theorem}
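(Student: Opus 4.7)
The plan is to formalize the four-step loop sketched at the start of the section. Starting from $x^{(0)} := x$, I maintain a sequence of fractional solutions $x^{(0)}, x^{(1)}, \ldots, x^{(T)}$ where $T = O(\log m)$ and each step either satisfies $x^{(t)} \succeq x^{(t-1)}$ (via grouping or adding waste) or is obtained from $x^{(t-1)}$ by an application of the Gluing Lemma. This way Corollary~\ref{cor:SequenceRoundingAndGluing} converts the final integral $x^{(T)}$ back into an integral $y \succeq x$ of the same cost, so I only need to track the cost increase $\bm{1}^T x^{(t)} - \bm{1}^T x^{(t-1)}$ (counting waste patterns at their price $2s_i$) across the iterations. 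Fix parameters
\[
  \gamma := \frac{1}{\log^{c_1} n}, \qquad \beta := \frac{1}{\log^{c_2} n}, \qquad \delta := \frac{\log \log n}{\log^2 n},
\]
for appropriate constants $c_1 < c_2$, so that $\varepsilon := \delta \beta/(2q)$ from Lemma~\ref{lem:WellSpreadInstance} is still $\frac{1}{\polylog(n)}$.

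In each iteration, I first replace $x^{(t)}$ by a basic solution (keeping $|\supp(x^{(t)})| \leq m$), then apply Lemma~\ref{lem:RoundingToMultipleOfGamma} with parameter $\gamma$ to obtain an $\succeq$-dominating solution whose nonzero entries are multiples of $\gamma$, at additive cost $O(\gamma \log^2 n) = o(1)$. Then I apply Lemma~\ref{lem:WellSpreadInstance} with parameters $\delta$ and $\beta$ to produce a $\delta$-well-spread solution for all items of size below $\varepsilon$, at additive cost $O(\beta \log^2 n) = o(1)$; this step also records the gluing operations needed by Corollary~\ref{cor:SequenceRoundingAndGluing}. Now the column-sparsity (each pattern has total item-size $\leq 1$) and well-spread row hypotheses of Theorem~\ref{thm:ConstructivePartialColoringForBinPacking} are met, so a call to that theorem returns $y$ with $\bm{1}^T y = \bm{1}^T x^{(t)}$, half the entries integral (discarding $y_p \leq 1/n$ and rounding $y_p \geq 1 - 1/n$ up costs $O(1)$ overall across all iterations), and the crucial prefix-sum guarantees.

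The main obstacle is to convert the partial coloring $y$ into a feasible $\succeq$-dominator by adding at most $O(\log \log n)$ cost in waste patterns. For the largest item $i_{\ell}$ of each size class $\ell$ whose items are \emph{large} (size above $\varepsilon$), the prefix deficit is $O(1/s_{i_\ell})$, so adding that many copies of $i_\ell$ to the waste costs $2 s_{i_\ell} \cdot O(1/s_{i_\ell}) = O(1)$; since there are only $O(\log(1/\varepsilon)) = O(\log \log n)$ such large size classes, this contributes $O(\log \log n)$ per iteration. For each small size class the deficit is $O(\sqrt{\delta \ln(2/\delta)}/s_i)$, so the per-class waste cost is $O(\sqrt{\delta \ln(2/\delta)})$; summed over the $O(\log n)$ small size classes this is $O(\log n \cdot \sqrt{\delta \ln(2/\delta)})$, which with my choice of $\delta = \Theta(\log \log n /\log^2 n)$ comes out to $O(\log \log n)$. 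Thus each iteration increases the objective by $O(\log \log n)$.

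After $T = O(\log m) \leq O(\log n)$ iterations the support has shrunk to a constant (at which point the remaining fractional mass is absorbed into the waste using Lemma~\ref{lem:RoundingFractionalWastePatterns} at $O(1)$ cost), so $x^{(T)}$ is integral and the total accumulated cost is $O(\log n \cdot \log \log n)$. Finally, Corollary~\ref{cor:SequenceRoundingAndGluing} converts $x^{(T)}$ into an integral $y \succeq x$ with $\bm{1}^T y \leq \bm{1}^T x^{(T)} \leq \bm{1}^T x + O(\log n \cdot \log \log n)$, which is the claimed bound. The running time is dominated by $O(\log n)$ applications of Lemma~\ref{lem:ConstructivePartialColoring}, each polynomial, yielding expected polynomial total time.
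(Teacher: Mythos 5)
Your proposal follows essentially the same route as the paper's own proof: iterate $O(\log n)$ times the sequence (round entries to multiples of $\gamma$ via Lemma~\ref{lem:RoundingToMultipleOfGamma}, make the solution well-spread via Lemma~\ref{lem:WellSpreadInstance}, apply Theorem~\ref{thm:ConstructivePartialColoringForBinPacking}, then repair the prefix deficits by adding $O(1/s_{i_\ell})$ resp.\ $O(\sqrt{\delta\ln(2/\delta)}/s_{i_\ell})$ copies of the largest item of each size class to the waste), and finish with Corollary~\ref{cor:SequenceRoundingAndGluing}. Your parameter choice $\delta=\Theta(\log\log n/\log^2 n)$ differs from the paper's $\delta=1/\lceil\log^4 n\rceil$, but both make the per-iteration cost $O(\log\log n)$ (the paper's choice renders the small-class term negligible while yours balances it against the $O(\log(1/\varepsilon))$ large-class term), and your constraints on $c_1,c_2$ can be chosen so that $\varepsilon\le\delta^2$ as Theorem~\ref{thm:ConstructivePartialColoringForBinPacking} requires; so the accounting is sound.

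One detail at the end needs repair. You claim the fractional support shrinks to a constant and that the leftover mass is absorbed by Lemma~\ref{lem:RoundingFractionalWastePatterns} at cost $O(1)$. Theorem~\ref{thm:ConstructivePartialColoringForBinPacking} carries the hypothesis $m\ge 100\log(\min_i\{2/s_i\})\approx 100\log n$, so the halving argument cannot be pushed below support size $\Theta(\log n)$; moreover Lemma~\ref{lem:RoundingFractionalWastePatterns} only applies to single-item (waste) patterns, not to leftover regular patterns. The standard fix, which the paper uses, is to stop once $|\supp(x)|\le 100\log n$ and simply round all remaining fractional regular entries up to $1$, at cost $O(\log n)$, which is within the budget; the waste patterns are then rounded at the very end via Lemma~\ref{lem:RoundingFractionalWastePatterns} and a First Fit assignment. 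With that adjustment (and noting the $[0,1/n]\cup[1-1/n,1]$ clean-up costs $O(1)$ per iteration, hence $O(\log n)$ in total rather than $O(1)$ overall, still harmless), your argument is complete.
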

\begin{proof}
We choose $\delta := \beta := \gamma := \frac{1}{\lceil\log^4 n\rceil}$ and  $\varepsilon := \frac{1}{4\log^{12}n} \leq \frac{1}{2}\gamma \beta \delta$.
After moving to a basic solution and buying integral parts of $x$, we may assume that
$x \in [0,1]^m$ and $|\supp(x)| \leq n$. Recall that $i_{\ell} = \textrm{argmax}\{ i \mid 2^{-(\ell+1)} < s_i \leq 2^{-\ell}\}$. We perform the following algorithm:
\begin{enumerate}
\item[(1)] FOR $t = 1$ TO $\log n$ DO
  \begin{enumerate}
  \item[(2)] Apply Lemma~\ref{lem:RoundingToMultipleOfGamma} to have all $x_p$ being multiples of $\gamma$.
  \item[(3)] Apply Lemma~\ref{lem:WellSpreadInstance} to make $x$ $\delta$-well spread for items of size at most $\varepsilon$.
  \item[(4)] Apply Theorem~\ref{thm:ConstructivePartialColoringForBinPacking} to $x$ to halve the number 
of fractional entries\footnote{To be precise, we apply Theorem~\ref{thm:ConstructivePartialColoringForBinPacking} for the submatrix of $A$ corresponding to the columns in $\supp(x)$.}
(if $|\supp(x)| \leq 100\log(n)$, just set $x_p:=1$ for all patterns in the support).
  \item[(5)] FOR each size class $\ell$, add $O(2^{\ell})$ items of $i_{\ell}$ to the waste
if $s_{i_{\ell}} \geq \varepsilon$ and $O(\sqrt{\delta\ln(\frac{2}{\delta})} \cdot 2^{\ell})$ items if $s_{i_{\ell}}<\varepsilon$.
  \end{enumerate}
\item[(6)] Resubstitute glued items in $x$ and replace waste patterns via a greedy assignment 
to obtain $y$.
\end{enumerate}
Let $x^{(t)}$ be the value of $x$ at the end of the $t$th while loop. 
Let $x^{(t,2)}$ be the solution $x$ at the \emph{end} of step (2)
in the $t$th iteration. We define $x^{(t,3)},x^{(t,4)},x^{(t,5)}$ analogously.  
Furthermore, $x^{(t,1)}$ be $x$ at the beginning of the while loop. 

First of all, observe that $x^{(t,2)}$ satisfies   $x^{(t,2)} \succeq x^{(t,1)}$  and $x^{(t,2)}_p \in \gamma \setZ_{\geq0}$ for all $p$, by the properties of Lemma~\ref{lem:RoundingToMultipleOfGamma}.
The vector $x^{(t,3)}$ emerges from $x^{(t,2)}$ by grouping and gluing 
and according to Lemma~\ref{lem:WellSpreadInstance}, it satisfies
 $A_{ip} \leq \delta \cdot A_ix^{(t,3)} \leq \delta \cdot \sum_{\tilde{p} \in \supp(x^{(t,3)})} \tilde{p}_i$ for all $p \in \supp(x^{(t,3)})$ and all $i$ with 
$s_i \leq \varepsilon \leq \frac{1}{2}\gamma\beta\delta $.
Finally, the conditions of Theorem~\ref{thm:ConstructivePartialColoringForBinPacking}
are satisfied by parameters $\delta$ and $\varepsilon$, thus the extra items bought in step (5)
are enough to have $x^{(t,5)} \succeq x^{(t,3)}$. 
Note that none of the steps (2),(3),(5) increases the number of regular 
patterns in the support, but $|\{ p \mid x^{(t,4)}_p \notin \{ 0,1\}\}| \leq \frac{1}{2} |\{ p \mid x^{(t,3)}_p \notin \{ 0,1\}\}|$, 
thus $x^{(\log n)}$ is indeed integral.

Hence, by Corollary~\ref{cor:SequenceRoundingAndGluing} we know that $y$ will be 
a feasible solution to the original bin packing instance of cost at most $(\bm{1},2s)^Tx^{(\log n)} + 1$.
It remains to account for the increase in the objective function. 
Each application of (2) increases the objective function by at most $O(\gamma \log^2 n)$. 
(3) costs us $O(\beta \log^2 n)$ and $(4)+(5)$
increases the objective function by $O(\log \frac{1}{\varepsilon} + \sqrt{\delta\ln(\frac{2}{\delta})} \log n)$.
In total over $\log n$ iterations, 
\[
 \bm{1}^Ty - \bm{1}^Tx \leq O(\gamma \log^3 n) + O(\beta\log^3 n) + O(\log n\cdot\log \tfrac{1}{\varepsilon}) + O({\textstyle \sqrt{\delta\ln(\frac{2}{\delta})}}\log^2 n)
\leq O(\log n \cdot \log \log n)
\]
plugging in the choices for $\delta,\beta,\gamma,\varepsilon$.
\end{proof}

Together with the remark from Lemma~\ref{lem:AssumptionSizesAtLeast1-n},
the approximation guarantee for our main result, Theorem~\ref{thm:MainContribution} follows.
Let us conclude with a quick estimate on the running time. 
Given a Bin Packing instance $s_1,\ldots,s_n$ (with one copy of each item, so $n$ is the total number of items), 
one can compute a fractional solution $x$ of cost $\bm{1}^Tx \leq (1+\varepsilon)OPT_f$ in time $O((\frac{n^2}{\varepsilon^4}+\frac{1}{\varepsilon^6}) \log^5(\frac{n}{\varepsilon}))$
with $|\textrm{supp}(x)| \leq n$ (see Theorem~5.11 in \cite{FractionalPackingAndCovering-PlotkinShmoysTardos-Journal95}).
We set $\varepsilon := \frac{1}{n}$ and obtain an $x$ with $\bm{1}^Tx \leq OPT_f + 1$ in time $O(n^6 \log^5(n))$.
It suffices to run the Constructive Partial Coloring Lemma with error parameter $\delta := \frac{1}{n}$, 
which takes time $O(\frac{\tilde{n}^3}{\delta^2} \log \frac{\tilde{n}\tilde{m}}{\delta}) = \tilde{O}(n^5)$ 
where $\tilde{n} \leq n \cdot \textrm{polylog}(n)$ is the number of vectors and $\tilde{m} \leq n$ is the dimension
of $x$.
In other words,  the running time is dominated by the computation of the fractional solution.
Finally we obtain a vector $y$ with $y_p \in [0,\frac{1}{n}] \cup [1-\frac{1}{n},1]$ and $|\supp(y)| \leq n$.
 We move those entries with $y_p \leq \frac{1}{n}$ to the waste, increasing the objective function by at most $n \cdot \frac{1}{n} = 1$
and we roundup those entries with $y_p \geq 1-\frac{1}{n}$.


\section{Remarks}

An interesting observation concerning the application of the 
Constructive Partial Coloring Lemma is the following: recall that we gave vectors $v_I$ for 
groups and vectors $v_G$ for all subgroups as input to Lemma~\ref{lem:ConstructivePartialColoring}.
But the solution $y$ returned by that lemma is the end point of a Brownian motion 
and satisfies $\Pr[|v(x-y)| \leq \lambda \|v\|_2] \leq e^{-\Omega(\lambda^2)}$ for every $\lambda \geq 0$ and $v \in \setR^m$
regardless whether $v$ is known to the algorithm or not. 
If we choose $\lambda_G := \log n$ (and $\delta,\gamma,\varepsilon$ slightly more generous), then
the guarantee $|v_G(x-y)| \leq \lambda_G\|v_G\|$ is satisfied for all subgroups with high probability
anyway and there is no need to include them in the input.

Moreover, we are not even using the full power 
of the constructive partial coloring lemma. Suppose we had only a 
weaker Lemma~\ref{lem:ConstructivePartialColoring} which needs the
stronger assumption that for example $\sum_i (1+\lambda_i)^{-10} \leq \frac{m}{16}$
instead of the exponential decay in $\lambda$. We would still obtain
the same asymptotic bound of $O(\log n \cdot \log \log n)$, thus a simple fine tuning of
parameters is not going to give any improvement.

The obvious question is, how tight is our analysis? 
In fact, it is plausible that a slightly changed algorithm with a more careful analysis 
can reduce the gap from  $O(\log n \cdot \log \log n)$ to $O(\log n)$. 
On the other hand, consider the seemingly simple \emph{3-Partition} case in which all $n$ items
have size $\frac{1}{4} < s_i < \frac{1}{2}$. Both, the approaches of
Karmarkar and Karp~\cite{KarmarkarKarp82} and ours provide 
a $O(\log n)$ upper bound on the integrality gap.
The construction of Newman and Nikolov~\cite{CounterexampleBecksPermutationConjecture-FOCS12} 
of 3 badly colorable permutations
can be used to define a 3-Partition instance with an optimum fractional
solution $x \in \{ 0, \frac{1}{2} \}^m$ such that any integral solution $y \in \setZ_{\geq0}^m$
with $\supp(y) \subseteq \supp(x)$ and $\bm{1}^Ty - \bm{1}^Tx \leq o(\log n)$ 
satisfies $\max_{i \in [n]} \{ \sum_{j \leq i} (A_jx-A_jy) \} \geq \Omega(\log n)$.
This suggests that either the $\log n$ bound is best possible for 3-Partition 
or some fundamentally new ideas are needed to make progress.

\paragraph{Acknowledgements.}

The author is grateful to Michel X. Goemans for helpful discussions and support
and to Nikhil Bansal for reading a preliminary version. 

\bibliographystyle{alpha}
\bibliography{BinPackingImprovement}

\end{document}